\documentclass[prl,aps,superscriptaddress]{revtex4}
\usepackage{bm}
\usepackage{amsfonts}
\usepackage{amsthm}
\usepackage[dvips]{graphicx}
\usepackage{mathrsfs}
\usepackage[intlimits]{amsmath}
\usepackage[colorlinks, citecolor=red]{hyperref}

\setcounter{MaxMatrixCols}{10}

\newtheorem{theorem}{Theorem}

\newtheorem{definition}{Definition}

\newtheorem{lemma}{Lemma}

\newcommand{\bra}[1]{\langle#1|}
\newcommand{\ket}[1]{|#1\rangle}
\newcommand{\innerp}[2]{\langle#1|#2\rangle}

\newcommand{\poly}{\mbox{poly}}

\begin{document}

\title{An efficient quantum algorithm for generative machine learning}
\author{X. Gao$^{1}$, Z.-Y. Zhang$^{1,2}$, and L.-M. Duan}
\affiliation{Center for Quantum Information, IIIS, Tsinghua University, Beijing 100084,
PR China}
\affiliation{Department of Physics, University of Michigan, Ann Arbor, Michigan 48109, USA}

\begin{abstract}
A central task in the field of quantum computing is to find
applications where quantum computer could provide exponential speedup over
any classical computer \cite{shor1999polynomial,feynman1982simulating,lloyd1996universal}. Machine learning represents an important field
with broad applications where quantum computer may offer significant speedup
\cite{Biamonte2017,ciliberto2017quantum,brandao2016quantum,brandao2017exponential,farhi2001quantum}.
Several quantum algorithms for discriminative machine
learning \cite{jebara2012machine} have been found based on efficient solving of linear algebraic problems \cite{harrow2009quantum,wiebe2012quantum,lloyd2013quantum,lloyd2014quantum,rebentrost2014quantum,cong2016quantum},
with potential exponential speedup in runtime under the assumption of
effective input from a quantum random access memory \cite{giovannetti2008quantum}. In machine learning,
generative models represent another large class \cite{jebara2012machine} which is widely used for
both supervised and unsupervised learning \cite{shalev2014understanding,goodfellow2016deep}.
Here, we propose an efficient quantum algorithm for machine learning based on a quantum generative model.
We prove that our proposed model is exponentially more powerful to represent
probability distributions compared with classical generative models and has
exponential speedup in training and inference at least for some
instances under a reasonable assumption in computational complexity
theory. Our result opens a new direction for quantum machine learning and
offers a remarkable example in which a quantum algorithm shows exponential
improvement over any classical algorithm in an important application
field. 
\end{abstract}

\maketitle

Machine learning and artificial intelligence represent a very important
application area which could be revolutionized by quantum computers with clever algorithms that offer
exponential speedup \cite
{Biamonte2017,ciliberto2017quantum}. The candidate algorithms with potential
exponential speedup so far rely on efficient quantum solution of linear
system of equations or linear algebraic problems \cite%
{lloyd2013quantum,lloyd2014quantum,rebentrost2014quantum,cong2016quantum}.
Those algorithms require quantum random access memory (QRAM) as a
critical component in addition to a quantum computer. In a QRAM,
the number of required quantum routers scales up exponentially with the
number of qubits in those algorithms \cite{giovannetti2008quantum,giovannetti2008architectures}. This exponential overhead in
resource requirement poses a significant challenge for its experimental
implementation and is a caveat for fair comparison with corresponding classical
algorithms \cite{aaronson2015read,ciliberto2017quantum}.

\begin{figure}[tbp]
\includegraphics[width=1\linewidth]{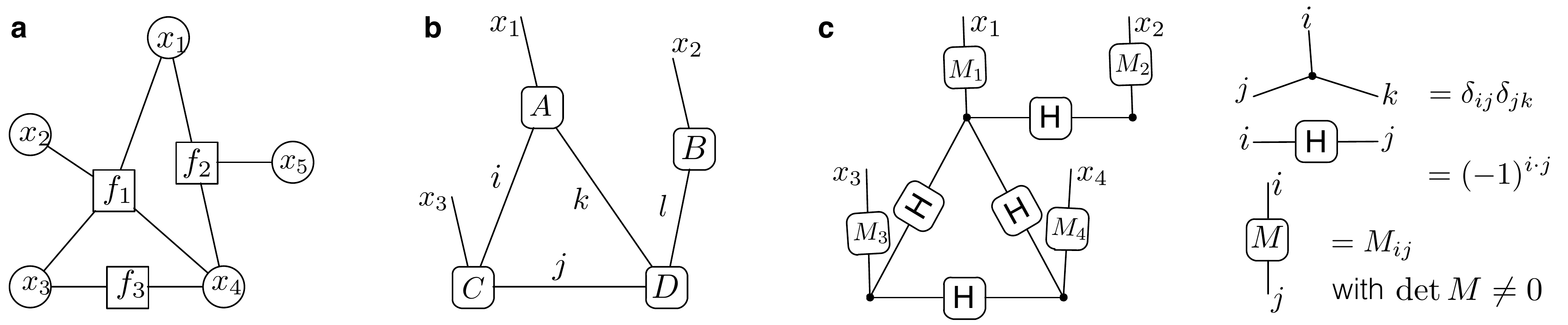}
\caption{\textbf{Classical and quantum generative models.} \textbf{a},
Illustration of a factor graph, which includes widely-used classical generative models as its
special cases. A factor graph is a bipartite graph where one group of the vertices represent variables (denoted by circles)
and the other group of vertices represent positive functions (denoted by squares) acting on connected variables.
The corresponding probability distribution is given by the product of all these functions. For instance, the probability distribution in (a) is $%
p(x_{1},x_{2},x_{3},x_{4},x_{5})=f_{1}(x_{1},x_{2},x_{3},x_{4})f_{2}(x_{1},x_{4},x_{5})f_{3}(x_{3},x_{4})/Z
$ where $Z$ is a normalization factor. Each variable connects to at most a constant number of functions which
introduce correlations in the probability distribution. \textbf{b,} Illustration of a tensor network state.
Each unshared (shared) edge represents a physical (hidden) variable, and each vertex represents a complex function of the variables on its connected edges. The wave function of the physical variables is defined as a product of the functions on all the vertices, after summation (contraction) of the hidden variables. Note that a tensor network state can be regarded as a quantum version of the factor graph after partial contraction (similar to marginal probability in classical case) with positive real functions replaced by complex functions. \textbf{c,} Definition of a quantum generative model (QGM) introduced in this paper. The state is a special type of tensor network state, with the vertex functions fixed to be three types as shown on the right side. Without the single-qubit invertible matrix $M_i$ which contains the model parameters, the wave function connected by Hadamard and identity matrices just represent a graph state. To get a probability distribution from this model, we measure a subset of $n$ qubits (among total $m$ qubits corresponding to physical variables) in the computational basis under this state. The unmeasured $m-n$ qubits are traced over to get the marginal probability distribution $P(\{x_i\})$ of the measured $n$ qubits. We prove in this paper that the $P(\{x_i\})$ is general enough to include probability distributions of all classical factor graphs and special enough to allow a convenient quantum algorithm for the parameter training and inference.}
\label{fig:generative_models}
\end{figure}

In this paper, we propose a quantum algorithm with potential exponential
speedup for machine learning based on generative models. Generative models
are widely used to learn the underlying probability distribution
describing correlations in observed data. It has utility in both
supervised and unsupervised learning with a wide range of applications from
classification, feature extraction, to creating new data such as style transfer
\cite{shalev2014understanding,goodfellow2016deep,bishop2006pattern}.
Compared with discriminative models such as support vector machine or
feed-forward neural network, generative models can express much more complex
relations among variables \cite{jebara2012machine}, which makes them broadly applicable but
at the same time harder to tackle. Typical generative models include
probabilistic graphical models such as Bayesian network and Markov
random field \cite{bishop2006pattern}, and generative neural networks such
as Boltzmann machine and deep belief network. All these classical
probabilistic models can be transformed into the so-called
factor graphs \cite{bishop2006pattern}.

Here we introduce a quantum generative model (QGM) where the probability
distribution describing correlations in data is generated by measuring a set of observables
under a many-body entangled state.  A generative model is largely characterized
by its representational power and its performance to learn the model parameters
from the data and to make inference about complex relationship between any variables. In terms of representational power, we
prove that our introduced QGM can efficiently represent any factor graphs, which
include almost all the classical generative models in practical applications as particular cases.
Furthermore, we show that the QGM is exponentially more
powerful than factor graphs by proving that at least
some instances generated by the QGM cannot be efficiently represented by any
factor graph with polynomial number of variables if a widely accepted
conjecture in computational complexity theory holds, that is, the polynomial
hierarchy, which is a generalization of the famous P versus NP problem, does not collapse.

Representational power and generalization ability \cite{sm} only measure
one aspect of a generative model. On the other hand we need an effective algorithm
for training and making inference. We propose a general learning algorithm utilizing quantum phase estimation of the
constructed parent Hamiltonian for the underlying many-body entangled state.
Although it is unreasonable to expect that the proposed quantum algorithm
has polynomial scaling in runtime in all cases (as this implies ability of a
quantum computer to efficiently solve any NP problem, an unlikely result),
we prove that at least for some instances, our quantum algorithm has
exponential speedup over any classical algorithm, assuming quantum
computers cannot be efficiently simulated by classical computers, a conjecture which is
believed to hold.

The intuition for quantum speedup in our algorithm can be
understood as follows: the purpose of generative machine learning is to
model any data generation process in nature by finding the underlying
probability distribution. As nature is governed by the law of quantum
mechanics, it is too restrictive to assume that the real world data can
always be modelled by an underlying probability distribution as in
classical generative models. Instead, in our quantum generative model, correlation in data is parameterized by
the underlying probability amplitudes of a many-body entangled
state. As the interference of quantum probability amplitudes can lead to
phenomena much more complex than those from classical probabilistic models,
it is possible to achieve big improvement in our quantum generative model under
certain circumstances.

We start by defining factor graph and our QGM. Direct characterization
of a probability distribution of $n$ binary variables has an exponential
cost of $2^{n}$. A factor graph, which includes many classical generative
models as special cases, is a compact way to represent $n$-particle
correlation \cite{bishop2006pattern,goodfellow2016deep}.
As shown in Fig. \ref{fig:generative_models}a, a factor graph is associated with a bipartite graph where the
probability distribution can be expressed as a product of positive
correlation functions of a constant number of variables. Here, without loss
of generality, we assumed constant-degree graph, in which the maximum number of edges per vertex is bounded by a constant.

Our QGM is defined on a graph state $|G\rangle $ of $m$ qubits associated
with a graph $G$. We introduce the following transformation

\begin{equation}
|Q\rangle \equiv M_{1}\otimes \cdots \otimes M_{m}|G\rangle ,
\end{equation}%
where $M_{i}$ denotes an invertible (in general non unitary) $2\times 2$
matrix applied on the Hilbert space of qubit $i$. From $m$ vertices of the graph $G$, we choose a
subset of $n$ qubits as the visible units and measure them in computational basis $\left\{ \left\vert0\right\rangle ,\left\vert 1\right\rangle \right\} $. The measurement outcomes sample from a probability
distribution $Q\left( \left\{ x_{i}\right\} \right) $ of $n$ binary variables $\left\{ x_{i},i=1,2,\cdots n\right\} $ (the other $m-n$ hidden qubits are just traced over to give the
reduced density matrix). Given graph $G$ and the subset of visible
vertices, the distribution $Q\left( \left\{ x_{i}\right\} \right) $ defines
our QGM which is parameterized efficiently by the parameters in the
matrices $M_{i}$. The state $|Q\rangle $ can be written as a special tensor
network state (see Fig. \ref{fig:generative_models}) \cite{sm}. We define our model in this form for two reasons:
first, the probability distribution $Q\left( \left\{ x_{i}\right\} \right) $ needs to be general enough to include all factor graphs; second, if the state $|Q\rangle$ takes a specific form, the parameters in this model can be conveniently trained by a quantum algorithm on a data set.

\begin{figure}[tbp]
\includegraphics[width=1\linewidth]{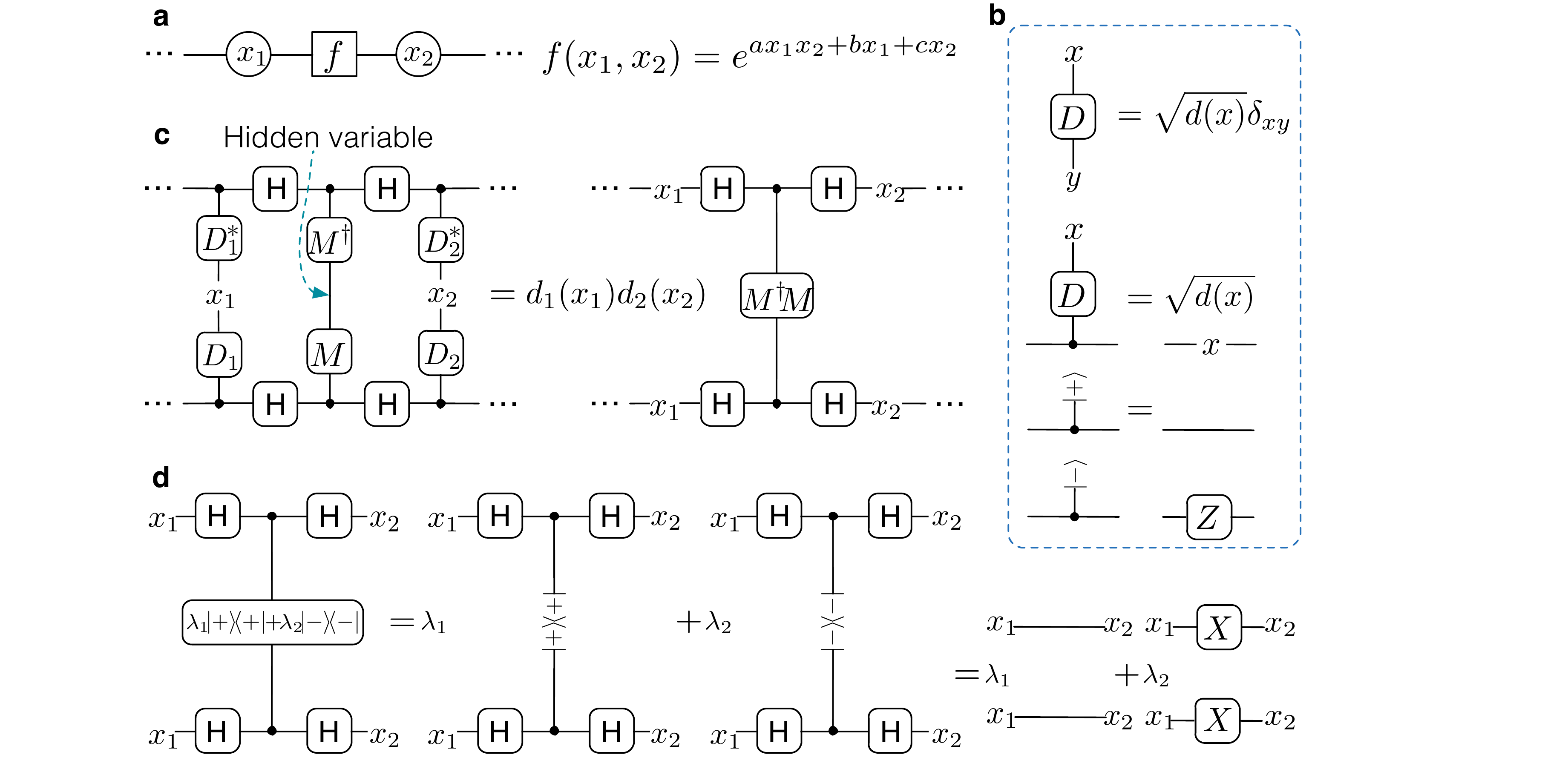}
\caption{ \textbf{Efficient representation of factor graphs by the quantum generative model.} \textbf{a,}
The general form of correlation function of two binary variables in a factor graph, with parameters $a,b,c$ being real.
This correlation acts as the building block for general
correlations in any factor graph by use of the universal approximation theorem \cite{le2008representational}. \textbf{b,} Notations of some common tensors and their identities: $D$ is a diagonal matrix with diagonal elements $\protect%
\sqrt{d(x})$ with $x=0,1$; $Z$ is the diagonal Pauli matrix $\text{diag}(1,-1)$; and $|\pm \rangle =(|0\rangle \pm |1\rangle )/%
\protect\sqrt{2}$. \textbf{c,} Representation of the building block correlator $f(x_1, x_2)$ in a factor
graph (see \textbf{a}) by the QGM with one hidden variable (unmeasured) between two visible variables $x_1, x_2$ (measured in the computational basis).  We choose the single-bit matrix $D_1, D_2$ to be diagonal with $D_{1}=\text{diag}(%
\protect\sqrt{d_{1}(0)},\protect\sqrt{d_{1}(1)})$ and $D_{2}=\text{diag%
}(\protect\sqrt{d_{2}(0)},\protect\sqrt{d_{2}(1)})$. In simplification of this graph, we used the identity in \textbf{b}. \textbf{d,} Further simplification of the graph in \textbf{c}, where we choose the form of the single-bit matrix $M^{\dag }M$ acting on the hidden variable to be $M^{\dag }M=\protect\lambda _{1}|+\rangle \langle +|+\protect%
\lambda _{2}|-\rangle \langle -|$ with positive eigenvalues $\lambda _{1},\lambda _{2}$. We used the identity in \textbf{b} and the relation $HZH=X$, where X (H) denotes the Pauli (Hadamard) matrix, respectively. By solving the values of $\lambda _{1},\lambda _{2},d_1(x_1),d_2(x_2)$ in terms of $a,b,c$ (see the proof of Theorem \ref{theorem1}), this correlator of QGM exactly reproduces
the building block correlator $f(x_1, x_2)$ of the factor graph.   }
\label{fig:Q_model}
\end{figure}

Now we show that any factor graph can be viewed as a special case of QGM by the
following theorem:

\begin{theorem}
\emph{The QGM defined above can efficiently represent probability distributions from any constant-degree factor graphs
by an arbitrarily high precision. }
\label{theorem1}
\end{theorem}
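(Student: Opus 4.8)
The plan is to reduce the whole factor graph to a single two‑variable ``building block'' correlator and then glue copies of it together inside the QGM. First I would invoke the universal approximation theorem \cite{le2008representational}: any constant‑arity positive factor $f_a(x_{\partial a})$, viewed up to an irrelevant constant as a distribution on $O(1)$ binary variables, can be approximated to any precision $\epsilon$ by a restricted Boltzmann machine over a bounded number of auxiliary hidden variables, whose factors are exactly the three‑parameter positive form $f(x_1,x_2)$ of Fig.~\ref{fig:Q_model}a. Applying this factor by factor turns the given constant‑degree factor graph into an equivalent (to precision $\poly\cdot\epsilon$) network over polynomially many binary variables in which every factor is one such building block and the graph stays constant‑degree. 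It then suffices to (i) realize one building block exactly inside the QGM and (ii) show that the gadgets compose.

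For (i) I would use the gadget of Fig.~\ref{fig:Q_model}c--d: two visible qubits carrying $x_1,x_2$ and one hidden qubit between them on a short path in $G$, with $M_1=D_1$ and $M_2=D_2$ chosen diagonal, $|D_i(x_i)|^2=d_i(x_i)$, and the hidden‑qubit matrix satisfying $M^\dagger M=\lambda_1\ket{+}\bra{+}+\lambda_2\ket{-}\bra{-}$ with $\lambda_1,\lambda_2>0$ (such an $M$ exists because the right‑hand side is positive). Projecting the visible qubits on the computational basis and tracing out the hidden qubit, the graph‑state controlled‑$Z$ gates collapse the hidden qubit to $Z^{x_1\oplus x_2}\ket{+}$, so the diagonal entry of the visible reduced density matrix is $Q(x_1,x_2)\propto d_1(x_1)\,d_2(x_2)\,h(x_1\oplus x_2)$ with $h(0)=\lambda_1,\ h(1)=\lambda_2$; using $HZH=X$ and the tensor identities of Fig.~\ref{fig:Q_model}b this is a general positive $2\times2$ factor, and one solves the resulting equations for $\lambda_1,\lambda_2,d_1(\cdot),d_2(\cdot)$ in terms of $a,b,c$. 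Note the output is automatically a normalized probability distribution since $\ket{Q}$ has finite norm, so the squared norm plays the role of the partition function $Z$ for free.

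For (ii), when a variable $x_i$ occurs in several building blocks I would attach a graph‑state ``copy'' gadget --- a small star/GHZ sub‑structure, which is local‑Clifford equivalent to broadcasting a single classical bit --- so that the one measured value of $x_i$ is delivered consistently to every building block incident to it; constant degree of the factor graph means this costs only $O(1)$ extra qubits per factor. Placing all building‑block and copy gadgets on a common graph $G$ and setting the $M_i$ as above on each gadget (identity elsewhere), the visible marginal of $\ket{Q}=M_1\otimes\cdots\otimes M_m\ket{G}$ factorizes into the product of the realized building‑block factors divided by $\langle Q|Q\rangle$, i.e.\ exactly the approximated factor‑graph distribution; the total qubit count is polynomial in the number of variables, the number of factors, and the precision parameter, which gives efficiency.

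I expect the main obstacle to be the composition/copy step rather than the single‑gadget computation: one has to verify that sharing a variable among several building blocks, while staying inside the restricted ``graph state $\otimes$ single‑qubit invertible matrices'' form, reproduces the product of factors with no spurious correlations linking the distinct hidden‑variable blocks, and that $\langle Q|Q\rangle$ is exactly the needed normalization. A secondary bookkeeping point is tracking how the number of auxiliary hidden variables (hence qubits) grows with the target precision $\epsilon$ in the universal‑approximation step; the building‑block identity itself is a finite, routine calculation.
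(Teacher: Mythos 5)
Your proposal is correct and follows essentially the same route as the paper: reduce each constant-arity factor via the universal approximation theorem to the two-variable correlator $e^{ax_1x_2+bx_1+cx_2}$, then realize that correlator with one hidden qubit between two visible qubits, diagonal $D_1,D_2$ on the visible qubits, and $M^{\dagger}M=\lambda_1\ket{+}\bra{+}+\lambda_2\ket{-}\bra{-}$ on the hidden qubit, solving for $d_1,d_2,\lambda_1,\lambda_2$ in terms of $a,b,c$. The composition step you flag as the main obstacle is in fact automatic and needs no extra copy gadget: a visible qubit measured in the computational basis broadcasts its value through all of its controlled-$Z$ edges simultaneously, so the marginal factorizes into the product of building-block correlators exactly as in the factor graph, with the diagonal $d_i(x_i)$ contributions simply multiplying.
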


As probabilistic graphical models and generative neural networks can
all be reduced to constant-degree factor graphs \cite{bishop2006pattern,sm}, the above theorem shows that our proposed QGM is general enough to include those probability distributions in widely-used classical generative models.

\textbf{Proof of Theorem \ref{theorem1}}: First, for any factor graph with degree bounded by a constant $k$, by the universal approximation theorem \cite%
{le2008representational}, each $k$-variable node function can be
approximated arbitrarily well with $2^{k}+k$ variables ($2^{k}$ of them are
hidden) connected by the two-variable correlator that takes the
generic form $f(x_{1},x_{2})=e^{ax_{1}x_{2}+bx_{1}+cx_{2}}$, where $x_{1}$, $x_{2}$ denote the binary variables and $a,b,c$ are real parameters. As $Q\left( \left\{ x_{i}\right\} \right) $ has a similar factorization structure as the factor graph after measuring the visible qubits $x_{i}$ under a diagonal matrix $M_{i}$ (see Fig. \ref{fig:Q_model}), it is sufficient to show that each correlator $f(x_{1},x_{2})$ can be constructed in the QGM. This construction can be achieved by adding one hidden variable (qubit) $j$ with invertible matrix $M_{j}$ between two visible variables $x_{1}$ and $x_{2}$. As shown in Fig. \ref{fig:Q_model}, we take $M_{1}$ and $M_{2}$ to be diagonal with eigenvalues $\sqrt{d_{1}\left( x_{1}\right) }$ and $\sqrt{d_{2}\left(x_{2}\right) }$, respectively, and $M_{j}^{\dagger}M_{j}=\lambda_{1}\left\vert +\right\rangle \left\langle+\right\vert +\lambda_{2}\left\vert -\right\rangle \left\langle -\right\vert $, where $\left\vert\pm \right\rangle =\left( \left\vert 0\right\rangle \pm \left\vert
1\right\rangle \right) /\sqrt{2}$. The correlator between $x_{1}$ and $x_{2}$
in the QGM is then given by $d_{1}\left( x_{1}\right) d_{2}\left(x_{2}\right) \left[ \lambda _{1}\delta _{x_{1}x_{2}}+\lambda _{2}(1-\delta_{x_{1}x_{2}})\right] /2$. We want it to be equal to $e^{ax_{1}x_{2}+bx_{1}+cx_{2}}$ to simulate the factor graph. There exists a simple solution with $d_{1}\left( 0\right)=d_{2}\left(0\right) =\lambda _{1}/2=1$ and $d_{1}\left( 1\right) =e^{b+a/2},$ $d_{2}\left( 1\right) =e^{c+a/2},$ $\lambda _{2}=2e^{-a/2}$. This completes the proof.

Furthermore, we can show that the QGM is exponentially more powerful than factor graphs in representing probability distributions. This is summarized by the following theorem:

\begin{theorem}
\emph{If the polynomial hierarchy in the computational complexity theory
does not collapse, there exist probability distributions that can be
efficiently represented by a QGM but cannot be efficiently represented even
under approximation by conditional probabilities from any classical
generative models that are reducible to factor graphs. }
\label{theorem2}
\end{theorem}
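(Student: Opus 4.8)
\textbf{Proof sketch of Theorem \ref{theorem2}}: The plan is to reduce the separation to the well-established hardness of classically sampling from the output distributions of quantum circuits. I would fix a family of instantaneous quantum polynomial-time (IQP) circuits on $n$ qubits --- a layer of Hadamards, a layer of commuting diagonal gates such as $e^{i\theta Z_i Z_j}$ and $e^{i\theta Z_i}$, and a final layer of Hadamards, followed by computational-basis readout --- whose output distributions $P_n$ cannot, by the theorems of Bremner--Jozsa--Shepherd and Bremner--Montanaro--Shepherd, be reproduced even within small $\ell_1$ error by any classical polynomial-time sampler unless the polynomial hierarchy collapses. The argument then splits into two halves: (i) every $P_n$ is the visible marginal of a polynomial-size QGM; and (ii) if such a $P_n$ were efficiently represented by a classical factor-graph model in the sense of the theorem --- that is, as the marginal of a polynomial-size factor graph whose conditionals $P(x_k \mid x_1,\dots,x_{k-1})$ are computable to within $1/\poly(n)$ in polynomial time --- then $P_n$ would be classically samplable in polynomial time, contradicting (i) under the non-collapse hypothesis. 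Taking the contrapositive gives the theorem.

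For half (i): stripped of the matrices $M_i$, the state $|Q\rangle$ is exactly a graph state, and graph states are universal resources for measurement-based quantum computation; the single-qubit invertible matrices $M_i$ supply precisely the non-Clifford ingredients (and, being permitted to be non-unitary, also generalized-measurement/post-selection power) needed to compile a polynomial-size circuit into the form ``graph state, then fixed local operations, then computational-basis readout of the visible qubits.'' Because IQP circuits are built from commuting gates, the associated measurement pattern can be taken non-adaptive, so the $M_i$ can be fixed in advance; the graph $G$ then has $m=\poly(n)$ vertices and the model carries $\poly(n)$ parameters. I would spell out the explicit $M_i$ and graph in the supplementary material \cite{sm}, using the ability established in Theorem \ref{theorem1} to realize continuous parameters to arbitrary precision, so that the hidden-qubit partial trace reproduces $P_n$ exactly, or to within any desired accuracy.

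For half (ii): the key point is the chain rule. Given a classical model reducible to a polynomial-size factor graph with efficiently computable conditionals, ancestral sampling --- draw $x_1\sim P(x_1)$, then $x_2\sim P(x_2\mid x_1)$, and so on --- yields a sample from an $\ell_1$-approximation of $P$ in polynomial time. Hence ``efficiently represented by conditional probabilities of a classical factor-graph model'' implies ``efficiently classically samplable,'' and combining with half (i) shows that a classical representation of any QGM-representable $P_n$ would make IQP output distributions efficiently classically samplable within small $\ell_1$ error, collapsing the polynomial hierarchy.

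The hard part will be carrying out half (i) rigorously in the approximate regime: exhibiting concrete $M_i$ that realize the non-Clifford rotations, checking that the normalization $\langle Q|Q\rangle$ stays bounded so the construction remains efficient, and verifying that the compilation preserves $\ell_1$ distance to the target distribution. Two further subtleties should be flagged: factor graphs are on their own \#P-hard to marginalize, so the theorem genuinely relies on the ``efficient conditionals'' qualifier rather than on factor graphs being weak per se; and the approximate-sampling step inherits the average-case hardness and anti-concentration assumptions underlying Bremner--Montanaro--Shepherd, which are well developed for IQP circuits and which I would either invoke explicitly or fold into the stated complexity hypothesis, while the exact/multiplicative-error version of the separation follows unconditionally from Bremner--Jozsa--Shepherd given only non-collapse of PH.
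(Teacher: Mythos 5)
Your half (i) is in the same spirit as the paper's construction (the paper uses a specific graph state dressed with a layer of $HZ(\theta)$ single-qubit invertible matrices, from the authors' earlier quantum-supremacy work, for which multiplicative approximation of the output probability $q(x)$ is $\mathsf{FP}^{\mathsf{\#P}}$-hard), so that part is fine in outline. The genuine gap is in half (ii). You convert ``efficiently represented by a factor graph'' into ``efficiently classically samplable'' via chain-rule ancestral sampling, and you yourself note that this requires the conditionals $P(x_k\mid x_1,\dots,x_{k-1})$ to be computable in polynomial time. But the theorem carries no such qualifier and the paper's proof does not assume it: ``efficiently represented'' means only that there exists a polynomial-size factor graph (polynomially many variables and factors, each factor efficiently evaluable on a given assignment) whose conditional probability $p(x|z)=\sum_y f(x,y)/\sum_{x,y}f(x,y)$ approximates the target. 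Since marginalizing a factor graph is itself $\mathsf{\#P}$-hard, your argument only excludes the efficiently samplable subclass of such models, which is a strictly weaker conclusion than Theorem~\ref{theorem2}.

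The idea you are missing is Stockmeyer's approximate-counting theorem, which is exactly what lets the paper dispense with samplability. Because the factors $f(x,y)$ are \emph{non-negative} and efficiently computable, the sums $\sum_y f(x,y)$ and $\sum_{x,y}f(x,y)$ can each be approximated to multiplicative error in $\mathsf{FBPP}^{\mathsf{NP}}$, so $p(x|z)$ is approximable in $\mathsf{FBPP}^{\mathsf{NP}}/\poly$ (the advice absorbing the description of the factor graph) whether or not the model can be sampled. The QGM amplitudes are sums of \emph{complex} numbers, to which Stockmeyer does not apply, and for the specific instance above their multiplicative approximation is $\mathsf{\#P}$-hard; a factor-graph representation to multiplicative error $\gamma<1/2$ would therefore give $\mathsf{P}^{\mathsf{\#P}}\subseteq\mathsf{BPP}^{\mathsf{NP}}/\poly$ and collapse the polynomial hierarchy to the third level via Toda's theorem plus relativized Adleman and Karp--Lipton. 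Note also that the error model must be pointwise multiplicative for this route to work with only the non-collapse hypothesis; your $\ell_1$/approximate-sampling variant would import the anti-concentration and average-case-hardness conjectures of Bremner--Montanaro--Shepherd, which go beyond what the theorem assumes.
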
The proof of this theorem involves many terminologies and results from the
computational complexity theory, so we present it in the Supplementary Material \cite{sm}.

\begin{figure}[b]
\includegraphics[width=1\linewidth]{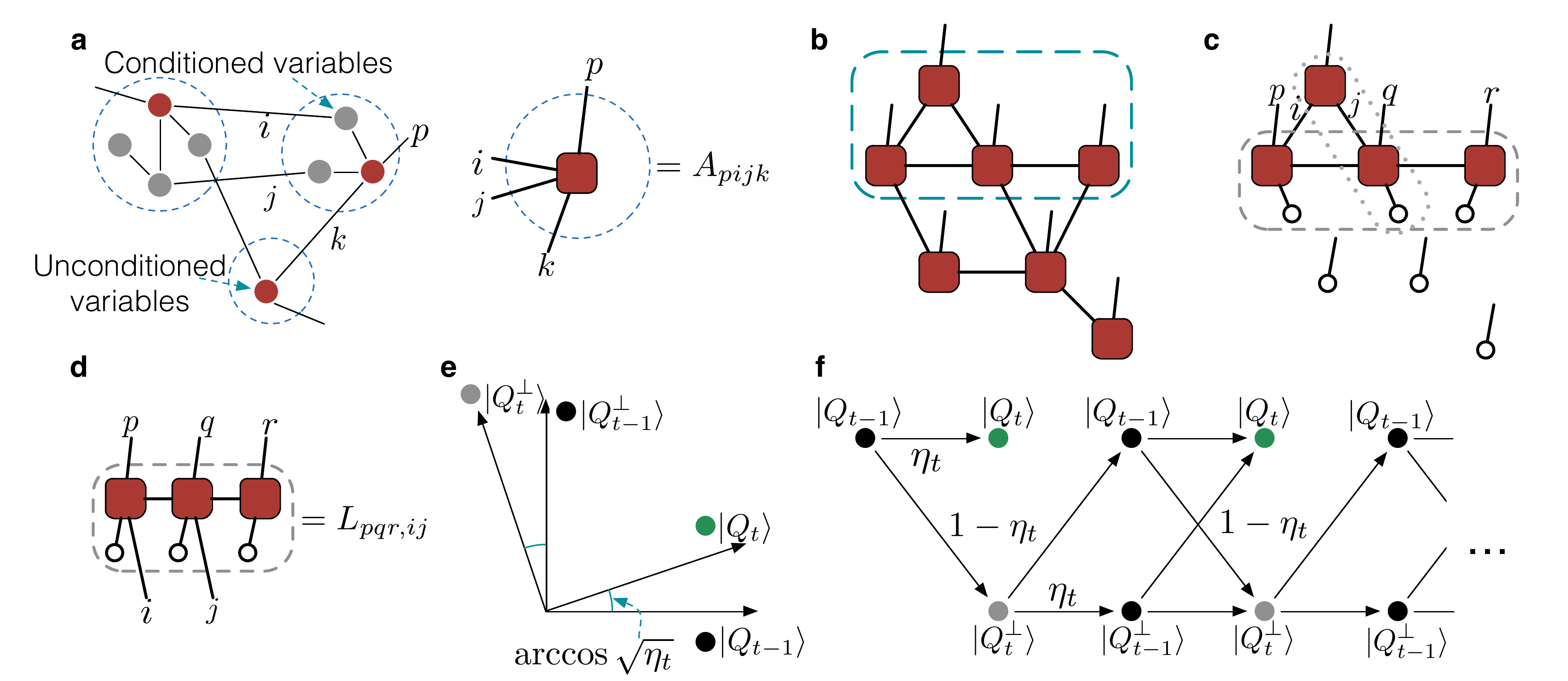}
\caption{ \textbf{Illustration of our training algorithm for the quantum generative model.}
\textbf{a,} Training and inference of the QGM are reduced to measuring certain operators under the state $|Q(z)\rangle$. The key step of the quantum algorithm
is therefore to prepare the state $|Q(z)\rangle$, which is achieved by recursive quantum phase estimation of the constructed parent Hamiltonian. The variables in the set $z$ whose values are specified are called conditioned variables, whereas the other variables that carry the binary physical index are called unconditioned variables. We group the variables in a way such that each group contains only one unconditioned variable and different groups are connected by a small constant number of edges (representing virtual indices or hidden variables). Each group then defines a tensor with one physical index (denoted by $p$) and a small constant number of virtual indices (denoted by $i,j,k$ in the figure). \textbf{b, } Tensor network representation of $|Q(z)\rangle$, where a local tensor is defined
for each group specified in \textbf{a}. \textbf{c, } Tensor network representation of $|Q_{t}\rangle $, where $|Q_{t}\rangle $ are the series of states
reduced from $|Q_{z}\rangle $. In each step of the reduction, one local tensor is moved out. The moved-out local tensors are represented by unfilled circles, each carrying a physical index set to $0$. For the edges between the remaining tensor network and the moved-out tensors, we set the corresponding virtual indices to $0$ (represented by unfilled circles). \textbf{d, } Construction of the parent Hamiltonian. The figure shows how to construct one term in the parent Hamiltonian, which corresponds to a group of neighboring local tensors such as those in the dashed box in \textbf{c}. After contraction of all virtual indices among the group, we get a tensor $L_{pqr,ij}$, which defines a linear map $L$ from virtual indices $i,j$ to physical indices $p,q,r$. As the indices $i,j$ take all the possible values, the range of this mapping $L$ spans a subspace $\text{range}(L)$ in the Hilbert space $H_{p,q,r}$ of the physical indices $p,q,r$. This subspace has a complementary orthogonal subspace inside $H_{p,q,r}$, denoted by $\text{comp}(L)$. The projector to the subspace $\text{comp}(L)$ then defines one term in the parent Hamiltonian, and by this definition $|Q_{t}\rangle $ lies in the kernel space of this projector. We construct each local term with a group of neighboring tensors. Each local tensor can be involved in several Hamiltonian terms (as illustrated in \textbf{c} by the dashed box and the dotted box), thus some neighboring groups have non-empty overlap, and they generate terms that in general do not commute. By this method, one can construct the parent Hamiltonian whose ground state uniquely defines the state $|Q_{t}\rangle $ \cite{perez2008peps}. \textbf{e, } States involved in the evolution from  $|Q_{t-1}\rangle $  to $|Q_{t}\rangle $
by quantum phase estimation applied on their parent Hamiltonians. $|Q_{t-1}^{\perp }\rangle ,|Q_{t}^{\perp }\rangle $ represent
the states orthogonal to $|Q_{t-1}\rangle ,|Q_{t}\rangle $, respectively, inside the two-dimensional subspace spanned by  $|Q_{t-1}\rangle $  and $|Q_{t}\rangle $. The angle between $|Q_{t}\rangle $ and $|Q_{t-1}\rangle $ is determined by the overlap $\protect\eta _{t}=|\langle Q_{t}|Q_{t-1}\rangle |^{2}$.
\textbf{f, } State evolution under recursive application of quantum phase estimation algorithm. Starting from the state $|Q_{t-1}\rangle $, we always stop at the state $|Q_{t}\rangle $, following any branch of this evolution, where $\eta _{t}$ and $1-\eta _{t}$ denote the probabilities of the corresponding outcomes. }
\label{fig:parent_hamiltonian}
\end{figure}

For a generative model to be useful for machine learning, apart from high
representational power and generalization ability (which are closely related  \cite{sm}), we also need to have efficient algorithms for both inference and training. For inference, we usually need to compute conditional
probability $\sum_{y}p(x,y|z)$ \cite{bishop2006pattern}, where $x,y,z$
denote variable sets. For training, we choose to minimize the
Kullback--Leibler (KL) divergence $D(q_{\text{d}}||p_{\theta })=-\sum_{v}q_{%
\text{d}}(v)\log (p_{\theta }(v)/q_{\text{d}}(v))$ between $q_{\text{d}}$,
the distribution of the given data sample, and $p_{\theta }$, the distribution of the generative model, with the whole parameter set denoted by $\theta $. Typically, one
minimizes $D(q_{\text{d}}||p_{\theta })$ by optimizing the model parameters $\theta$ using the gradient descent method \cite{goodfellow2016deep}. The $\theta $-dependent part $D\left( \theta \right) $ of $D(q_{\text{d}}||p_{\theta })$ can be expressed as $-\sum_{v\in \text{data set}}\log p_{\theta }(v)/M$, where $M$ denotes the total number of data points. As the number of parameters is bounded by $\mbox{poly}(n)$,
the required data size $M$ for training is typically also bounded by a $\mbox{poly}(n)$ function \cite{sm,shalev2014understanding}.

In our QGM, both of the conditional probability $\sum_{y}p(x,y|z)$ and the gradient of KL divergence $\partial _{\theta }D\left( \theta \right) $ can be conveniently calculated using the structure of state $|Q\rangle $ defined in Fig. \ref{fig:generative_models}. We first define a tensor network state $|Q(z)\rangle
\equiv (I\otimes \langle z|)|Q\rangle $ by projecting the variable set $z$
to the computational basis. As shown in the Supplementary Material \cite{sm}, the
conditional probability can be expressed as

\begin{equation}
\sum_{y}p(x,y|z)=\frac{\langle Q(z)|O|Q(z)\rangle }{\langle Q(z)|Q(z)\rangle
},
\label{QGM_rep}
\end{equation}%
which is the expectation value of the
operator $O=|x\rangle \langle x|$ under the state $|Q(z)\rangle $. Similarly, we show in the Supplementary Material \cite{sm} that $\partial _{\theta }D\left( \theta
\right) $ can be reduced to a combination of terms taking the same form as Eq.
\ref{QGM_rep}, with operator $O$ replaced by $O_{1}=(\partial _{\theta
_{i}}M_{i})M_{i}^{-1}+H.c.$ or $O_{2}=|v_{i}\rangle \langle v_{i}|(\partial
_{\theta _{i}}M_{i})M_{i}^{-1}+H.c.$, where $\theta _{i}$ denotes a specific
parameter in the invertible matrix $M_{i}$; $v_{i}$ is the qubit of data $v$
corresponding to variable $x_{i}$; and $H.c.$ stands for the Hermitian conjugate
term. The variable $z$ in this case takes the value of $v$ (or $v$ excluding
$v_{i}$) expressed in a binary string.

With the above simplification, training or inference in the QGM is reduced to preparation of the tensor network state $|Q(z)\rangle $. With an algorithm similar to the one in Ref. \cite{schwarz2012preparing}, we use recurrent quantum phase estimation to prepare the state $|Q(z)\rangle $. For this purpose, first we construct the parent Hamiltonian $H(z)$ whose unique ground state is $|Q(z)\rangle $. This is shown in Fig. \ref{fig:parent_hamiltonian}, where the variables $z=\{z_{i}\}$ are grouped as in Fig. \ref{fig:parent_hamiltonian}a. In this case, the corresponding local tensors in the tensor network state $|Q(z)\rangle $ are all easy to compute and of constant degree. The parent Hamiltonian $H(z)$ is constructed through contraction of these local tensors as shown in Fig. \ref{fig:parent_hamiltonian}c to 3d \cite{perez2008peps}.

By construction of the parent Hamiltonian for the tensor network state,
the quantum algorithm for training and inference in the QGM is realized through the following steps:

\begin{description}
\item \textbf{Step 1:} Construct a series of tensor network states $\{|Q_{t}\rangle \}$ with $t=0,1,... ,n$ as shown in Fig. \ref{fig:parent_hamiltonian}c by reduction from $|Q_{n}\rangle =|Q(z)\rangle $. The initial state $|Q_{0}\rangle $ is a product state $|0\rangle ^{\otimes n}$, and $|Q_{t}\rangle $ is constructed from $|Q_{t-1}\rangle $ by adding one more tensor in $|Q(z)\rangle $ that is not contained in $|Q_{t-1}\rangle $ and setting the uncontracted virtual indices as $0$.

\item \textbf{Step 2:} Construct a parent Hamiltonian $H_{t}$ for each $|Q_{t}\rangle $ with the method illustrated in Fig. \ref{fig:parent_hamiltonian} \cite{perez2008peps}.

\item \textbf{Step 3:} Starting from $|Q_{0}\rangle $, we prepare $|Q_{1}\rangle , ... ,|Q_{n}\rangle $ sequentially. Suppose we have
prepared $|Q_{t-1}\rangle $, the following sub-steps will prepare $|Q_{t}\rangle $ based on the recursive quantum phase estimation. 

\begin{description}
\item \textbf{Sub-step 1:} Use the phase estimation algorithm \cite{abrams1999quantum} on the parent Hamiltonians $H_{t}$ and $H_{t-1}$ to
implement two projective measurements of $\{|Q_{t}\rangle \langle
Q_{t}|,I-|Q_{t}\rangle \langle Q_{t}|\}$ and $\{|Q_{t-1}\rangle \langle
Q_{t-1}|,I-|Q_{t-1}\rangle \langle Q_{t-1}|\}$ (see Fig. \ref{fig:parent_hamiltonian}e). The runtime of this algorithm is proportional
to $1/\Delta _{t}$ or $1/\Delta _{t-1}$ with some $\mbox{poly}(n)$ overhead, where $\Delta _{t}$ ($\Delta _{t-1}$) denotes the energy gap of the Hamiltonian $H_{t}$ ($H_{t-1}$).

\item \textbf{Sub-step 2:} Starting from $|Q_{t-1}\rangle $, we perform the
projective measurement $\{|Q_{t}\rangle \langle Q_{t}|,I-|Q_{t}\rangle
\langle Q_{t}|\}$. If the result is $|Q_{t}\rangle $, we succeed and skip the following sub-steps. Otherwise, we get $|Q_{t}^{\perp }\rangle $ lying in the plane spanned by $|Q_{t-1}\rangle $ and $|Q_{t}\rangle $.

\item \textbf{Sub-step 3:} We perform the projective measurement $\{|Q_{t-1}\rangle \langle Q_{t-1}|,I-|Q_{t-1}\rangle \langle Q_{t-1}|\}$ on the state $|Q_{t}^{\perp }\rangle $. The result is either $|Q_{t-1}\rangle $ or $|Q_{t-1}^{\perp }\rangle $.

\item \textbf{Sub-step 4:} We perform the projective measurement $\{|Q_{t}\rangle \langle Q_{t}|,I-|Q_{t}\rangle \langle Q_{t}|\}$ again. We either succeed in getting $|Q_{t}\rangle $, with probability $\eta _{t}=|\langle Q_{t}|Q_{t-1}\rangle |^{2}$, or have $|Q_{t}^{\perp }\rangle $. In the latter case, we go back to the sub-step 3 and continue until success.

\item The decision/evolution tree of the above process is shown in Fig. \ref{fig:parent_hamiltonian}f. The computational time from $|Q_{t-1}\rangle $ to
$|Q_{t}\rangle $ is proportional to the average number of sub-steps

\begin{equation}
\eta _{t}+\sum_{k=0}^{\infty }(4k+6)(1-\eta _{t})^{2}\eta _{t}(\eta
_{t}^{2}+(1-\eta _{t})^{2})^{k}=\frac{1}{\eta _{t}}+1
\end{equation}
\end{description}

\item \textbf{Step 4:} After successful preparation of the state $|Q(z)\rangle $, we measure the operator $O$ (for inference) or $O_{1}$, $O_{2}$ (for training), and the expectation value of the measurement gives the required conditional probability or the gradient of KL divergence for learning.

\end{description}

The runtime of the whole algorithm described above is proportional to the
maximum of $1/\Delta _{t}$ and $1/\eta _{t}$. The gap $\Delta _{t}$ and the overlap $\eta _{t}$ depend on the topology of the graph $G$ and the parameters of the matrices $M_{i}$. If these two quantities are bounded by $\mbox{poly}(n)$ for all the steps $t$ from $1$ to $n$, this quantum algorithm will be efficient (runtime bounded by $\mbox{poly}(n)$). Although we do not expect this to be true in the worst case (even for the simplified classical generative model such as the restricted Boltzmann machine, the worst case complexity is at least NP hard \cite{dagum1993approximating}), we know that the QGM  with the above heuristic algorithm will provide exponential speedup over classical generative models for some instances. In the Supplementary Material \cite{sm}, we give a rigorous proof that our algorithm has exponential speedup over any classical algorithm for some instances under a reasonable conjecture. The major idea of this proof is as follows:

We construct a specific $|Q(z)\rangle $ which corresponds to the tensor network representation of the history state for universal quantum circuits rearranged into a two-dimensional (2D) spatial layout. The history state is a powerful tool in quantum complexity theory \cite{kitaev2002classical}. Similar type of history state has been used before to prove the QMA-hardness for spin systems in a 2D lattice \cite{oliveira2008complexity}. For this specific $|Q(z)\rangle $, we prove that both the gap $\Delta _{t}$ and the overlap $\eta _{t}$ scale as $1/\mbox{poly}(n)$ for all the steps $t$, by calculating the parent Hamiltonian of $|Q_{t}\rangle $ directly with proper grouping of
local tensors.  Our heuristic algorithm for training and inference therefore can be accomplished in polynomial time. On the other hand, our specific state $|Q(z)\rangle $ encodes universal quantum computation through representation of the history state, so it cannot be achieved by any classical algorithm in polynomial time if quantum computation cannot be efficiently simulated by a classical computer.  We summarize the result with the following theorem:

\begin{theorem}
\emph{There exist instances of computing conditional probability and
gradient of KL divergence to additive error $1/\mbox{poly}(n)$ such that (i)
our algorithm achieves it in polynomial time; (ii) any classical algorithm
cannot accomplish them in polynomial time unless universal quantum computing can be
simulated efficiently by a classical computer. }
\end{theorem}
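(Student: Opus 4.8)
The plan is to exhibit a single family of instances, parametrised by the size of a universal quantum circuit, for which both claims hold, by taking $|Q(z)\rangle$ to be the tensor-network encoding of a Feynman--Kitaev history state arranged on a two-dimensional grid. Given a quantum circuit $C=U_L\cdots U_1$ of $\poly(n)$ gates, I would first convert it into a circuit of nearest-neighbour two-qubit gates on a line (with only $\poly(n)$ overhead), place the $L+1$ time slices along one direction and the qubit worldlines along the other, and encode the clock spatially in the style of the 2D QMA-hardness constructions \cite{oliveira2008complexity}. Each gate, together with the spatial clock bookkeeping, is folded into one local tensor of constant bond dimension and constant degree; choosing the Hadamard, identity, and invertible $M_i$ vertices of the QGM of Fig.~\ref{fig:generative_models}c appropriately makes $|Q\rangle$ equal to this network, and fixing the boundary and clock variables through the $z$-projection makes $|Q(z)\rangle$ proportional to $\sum_{t=0}^{L}|t\rangle_{\mathrm{clock}}\otimes U_t\cdots U_1|0\cdots0\rangle$. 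One then checks the calibration: with $O=|x\rangle\langle x|$ on the designated output qubit and $z$ fixing the clock to the final slice, Eq.~\ref{QGM_rep} returns exactly the acceptance probability of $C$.

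For the quantum side I would run the recursive phase-estimation procedure of Steps~1--4 verbatim, so the only work is to bound its runtime, i.e. to show $\Delta_t\ge 1/\poly(n)$ and $\eta_t\ge 1/\poly(n)$ for every intermediate state $|Q_t\rangle$. For the overlaps: passing from $|Q_{t-1}\rangle$ to $|Q_t\rangle$ adds one local tensor and pins the freshly exposed virtual indices to $0$, a purely local change, so $\eta_t=|\langle Q_t|Q_{t-1}\rangle|^2$ is the squared fidelity of two vectors differing on one constant-size group; since the local tensors are well-conditioned by construction and the history state carries weight $\Theta(1/L)$ on each slice, the drop per step is at most $1-1/\poly(n)$. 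For the gaps: each $|Q_t\rangle$ is frustration-free with a parent Hamiltonian built from constant-size projectors as in Fig.~\ref{fig:parent_hamiltonian}d \cite{perez2008peps}, and I would lower-bound its gap by combining Kitaev's geometric lemma with the known inverse-polynomial gap of the history-state propagation Hamiltonian \cite{kitaev2002classical}, slice by slice, using a Knabe/Nachtergaele-type local-to-global argument to pass from the local projectors to the global spectral gap. The runtime of Step~3 is then $\poly(n)$ times $\max_t(1/\eta_t,1/\Delta_t)=\poly(n)$, and Step~4 estimates the conditional probability, or the gradient terms built from $O_1$, $O_2$, to additive error $1/\poly(n)$ using $\poly(n)$ repetitions.

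For the classical side I would choose $C$ to decide an arbitrary $\mathrm{BQP}$ language, amplified so that its acceptance probability is $\ge 2/3$ on ``yes'' inputs and $\le 1/3$ on ``no'' inputs. By the calibration above, $\sum_y p(x,y|z)$ for this instance equals that acceptance probability, so a classical algorithm computing it to additive error $1/\poly(n)$ — which for large $n$ is below $1/6$ — would decide the language in classical probabilistic polynomial time; ranging over all $\mathrm{BQP}$ languages yields $\mathrm{BQP}\subseteq\mathrm{BPP}$, contradicting the stated conjecture. The same instance settles the KL-gradient claim, because $O_1$ and $O_2$ are expectation values under the same $|Q(z)\rangle$ and, for a suitable choice of the parameter $\theta_i$ and the data bit $v_i$, a $1/\poly(n)$-accurate estimate of them again encodes the output statistics of $C$.

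The step I expect to be the main obstacle is the gap analysis of the whole family of truncated states $\{|Q_t\rangle\}$ rather than of the history state alone: the $|Q_t\rangle$ are partially assembled networks with some virtual bonds pinned to $0$, so their parent Hamiltonians differ term by term from the textbook propagation Hamiltonian, and one must verify that the pinning neither produces an ill-conditioned local tensor nor opens an exponentially small gap at any of the $n$ intermediate steps. Arranging the grouping of tensors in Fig.~\ref{fig:parent_hamiltonian}a so that every $H_t$ simultaneously inherits an inverse-polynomial gap, and carrying out the geometric-lemma and local-to-global bookkeeping uniformly in $t$, is the delicate part; the overlap bounds, the calibration, and the classical reduction are comparatively routine once the encoding is fixed.
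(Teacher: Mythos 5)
Your overall construction coincides with the paper's: a 2D tensor-network encoding of the Feynman--Kitaev history state in the Oliveira--Terhal layout, calibrated so that the conditional probability (and, via a suitable parametrization of one $M_i$, the KL gradient) equals the acceptance probability of a universal circuit, with classical hardness following from the reduction to simulating $\mathrm{BQP}$ classically. The two places where you diverge from, or fall short of, the paper's proof are exactly the steps you flag. First, the overlap bound: your argument that ``a purely local change'' to a well-conditioned tensor network forces $\eta_t\ge 1/\poly(n)$ is not valid in general---adding one local tensor can rotate the global state nearly orthogonally even when every local tensor is well conditioned, because the relevant inner product is induced by the whole remaining network. The paper sidesteps this by choosing the reduction order so that each intermediate state is itself the truncated history state $|Q_t\rangle=\frac{1}{\sqrt{t+1}}\sum_{t_1=0}^{t}|t_1\rangle\otimes V_{t_1}\cdots V_1|0\rangle^{\otimes m}$, whence $\langle Q_t|Q_{t-1}\rangle=\sqrt{1-1/t}$ exactly and $\eta_t\ge 1/2$ for $t\ge 2$; you should identify the $|Q_t\rangle$ explicitly rather than appeal to locality. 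Second, the gap bound: you propose a Knabe/Nachtergaele-type local-to-global argument, whereas the paper writes down the local projectors $\Pi_t^{(1)},\dots,\Pi_t^{(4)}$ of the parent Hamiltonian explicitly and applies the Kempe--Kitaev--Regev self-energy perturbation expansion twice, with a coupling hierarchy $J\sim J_1\sim J_3^{3}$ and $J_3\gg T^{4}$, to reduce the low-energy spectrum to that of the standard propagation Hamiltonian and obtain $\Delta_p=\Omega(1/T^{14})=1/\poly(n)$. Your alternative route is plausible but unexecuted, and since you yourself identify the uniform-in-$t$ gap analysis as the main obstacle, the proposal as written establishes the calibration and the classical-hardness direction (ii) but does not yet constitute a proof of part (i).
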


In summary, we have introduced a quantum generative model for machine learning and proven that it offers exponential improvement in representational power over widely-used classical generative models. We have also proposed a heuristic quantum algorithm for training and making inference on our model, and proven that this quantum algorithm offers exponential speedup over any classical algorithm at least for some instances if quantum computing cannot be efficiently simulated by a classical computer. Our result combines the tools of different areas and generates an intriguing link between quantum many-body physics, quantum computation and complexity theory, and the machine learning frontier. This result opens a new route to apply the power of quantum computation to solving the challenging problems in machine learning and artificial intelligence, which, apart from its fundamental interest, has wide application potential.








\textbf{Acknowledgements} This work was supported by the Ministry of Education and the National key Research
and Development Program of China. LMD and ZYZ acknowledge in addition support from the MURI program.

\textbf{Author Information} All the authors contribute substantially to this work.
The authors declare no competing financial interests. Correspondence and requests
for materials should be addressed to L.M.D. (lmduanumich@gmail.com) or X. G. (gaoxungx@gmail.com).


\widetext

\clearpage
\setcounter{equation}{0}
\setcounter{figure}{0}
\setcounter{table}{0}
\makeatletter
\renewcommand{\theequation}{S\arabic{equation}}
\renewcommand{\thefigure}{S\arabic{figure}}
\renewcommand{\bibnumfmt}[1]{[S#1]}
\renewcommand{\citenumfont}[1]{S#1}

\begin{center}
    \textbf{\large Supplementary Material}
\end{center}

\bigskip

In this supplementary information, we provide rigorous proofs of theorem 2 and theorem 3 in the
main text. We also explain the relation between the representational power and the generational ability
of a generative model, the reduction of typical classical generative models to factor graphs, and more
details about the training and inference algorithms for our quantum generative model.

\section{ Representational power and generalization ability}

In this section, we briefly introduce some basic concepts in statistical
learning theory \cite{vapnik2013natures}, the theoretical foundation of
machine learning. Then we discuss the intuitive connection between the
generalization ability and representational power of a machine learning
model. This connection does not hold in the sense of mathematical rigor and counter examples exist, however, it is still a good guiding
principle in practice. More details can be found in the introductory book \cite%
{shalev2014understandings}.

A simplified way to formulate a machine learning task is as follows: 
given $M$ data generated independently from a distribution $%
\mathcal{D}$ (which is unknown), a set of hypothesis $\mathcal{H}=\{h\}$
(which could be a model with some parameters) and a loss function $L$
characterizing how ``good" a hypothesis $h$ is, try to find an $h_A$
produced by some machine learning algorithm $A$, minimizing the actual loss:
\begin{equation}
L_{\mathcal{D}}(h_A)\sim \epsilon_\text{bias}+\epsilon_\text{est}\text{
where }\epsilon_\text{bias}=\min_{h\in\mathcal{H}}L_{\mathcal{D}}(h) \text{
and } \epsilon_\text{est}\text{ depends on }M.
\end{equation}
The term $\epsilon_\text{bias}$ represents the bias error. When $\mathcal{H}$ is a hypothesis
class which is not rich enough or the number of parameters in the model is
small, this term might be large no matter how many training data are given,
which leads to underfitting. The term $\epsilon_\text{est}$ represents the generalization
error. When $\mathcal{H}$ is a very rich hypothesis class or the number of
parameters in the model is too large, this term might be large if the number
of training data is not enough, which leads to overfitting. This is the
bias-complexity trade-off of a machine learning model.

The generalization ability of a machine learning model is usually quantified by
sample complexity $M(\epsilon_\text{est},\delta)$ in the framework of
Probably Approximately Correct (PAC) learning \cite{valiant1984theorys}, which
means if the number of training data is larger than $M(\epsilon_\text{est}%
,\delta)$, the generalization error is bounded by $\epsilon_\text{est}$ with
probability at least $1-\delta$. It has been proved that the sample complexity has
the same order of magnitude as a quantity of the hypothesis set $\mathcal{H}
$, the Vapnik-Chervonenkis dimension (VC dimension)\cite{vapnik2015uniforms}.
The VC dimension could be used to characterize the ``effective" degree of
freedom of a model, which is the number of parameters in most situations.
With a smaller VC dimension, the generalization ability gets better.

There are some caveats of using the VC dimension to connect the
generalization ability and the representational power of a machine learning
model though, which include: (i) the VC dimension theory only works in the framework of PAC
learning (thus restricted to supervised learning), so it is not directly
applicable to generative models; (ii) the number of parameters does not always
match the VC dimension, e.g., the combination of several parameters as $%
\theta_1+\theta_2+\cdots$ only counts as one effective parameter, meanwhile
there also exists such a model with only one parameter but having infinite VC dimensions. 
However, despite of existence of those counter examples, the VC dimension matches the number of
parameters in most situations, so it is still a guiding principle to choose models with a smaller number of parameters. If the
number of parameters is large, in order to determine each parameter, it
needs a large amount of information, thus a large number of data is
necessary.

In the case of the QGM, we find a distribution (illustrated by a blue circle in Fig. \ref%
{fig:FGvsQGM}) such that, in order to represent it or equivalently make $%
\epsilon_\text{bias}$ small, the factor graph should have a number
of parameters exponentially large (Fig. \ref{fig:FGvsQGM}b). In this case, the distribution that
the factor graph could represent has a very large degrees of freedom.
However, the distribution which the QGM could represent only occupies a small
corner of the whole distribution space (Fig. \ref{fig:FGvsQGM}b), so the
information needed to determine the parameters will be much smaller. Similar
reasoning has been used to illustrate the necessity of using deep
architecture in neural network \cite%
{goodfellow2016deeps,maass1994comparisons,maass1997boundss,montufar2014numbers,poggio2015theorys,eldan2015powers,mhaskar2016learning1s,raghu2016expressives,poole2016exponentials,mhaskar2016learning2s,lin2016doess,liang2016deeps}%
.
\begin{figure}[tbp]
\centering
\includegraphics[width=0.5\linewidth]{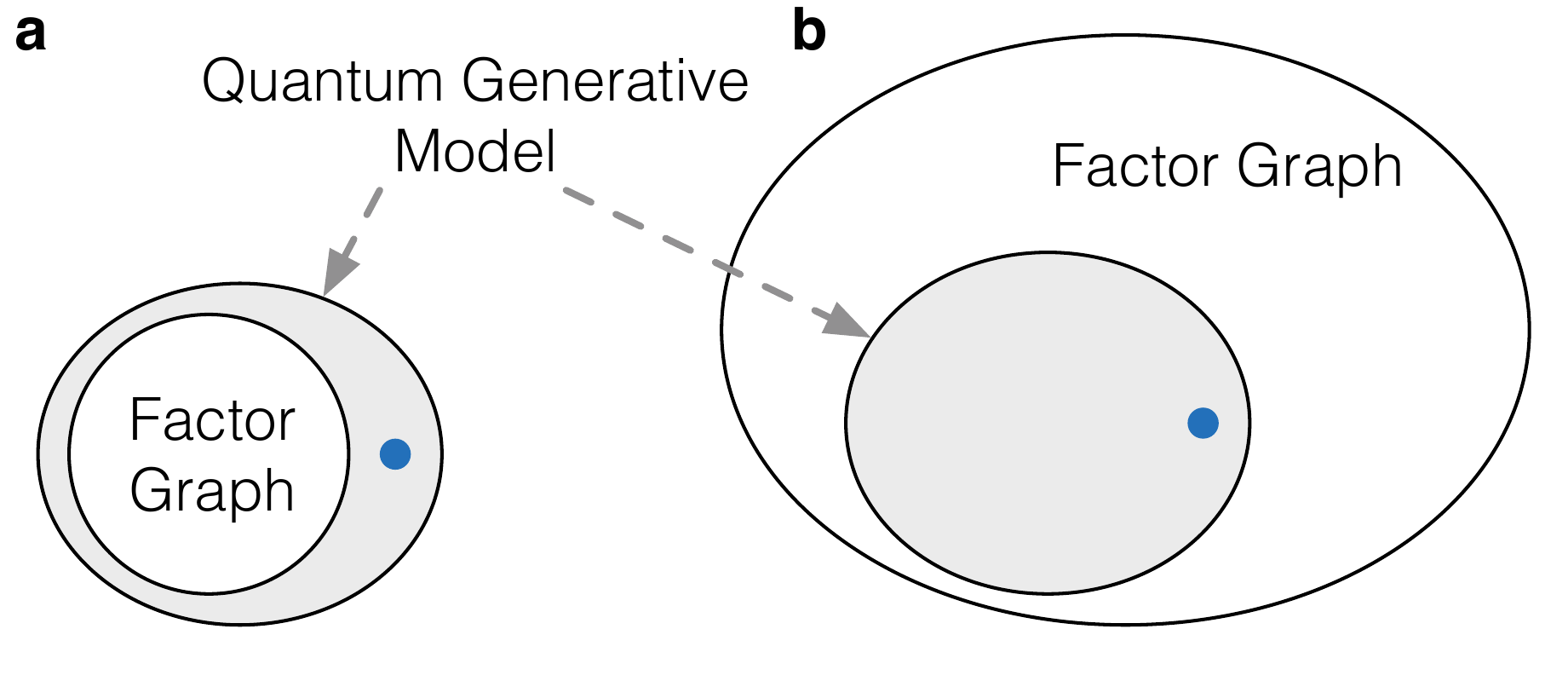}
\caption{ \textbf{Parameter Space of Factor Graph and QGM. a,} The case when both models
have a polynomial number of parameters. In this case, factor graphs cannot represent
some distributions in the QGM illustrated as blue circles. \textbf{b,} In order to
represent the blue circle distribution from the QGM, factor graphs have to
involve an exponentially large number of parameters. In this case, the parameter space will
inflate to a very large scale. }
\label{fig:FGvsQGM}
\end{figure}

\section{Reduction of typical generative models to factor graphs}

\label{sec:GNN}

In this section, we review typical classical generative models with graph structure. There are
two large classes: one
is the probabilistic graphical model \cite%
{koller2009probabilistics,bishop2006patterns}, and the other is the energy-based
generative neural network. Strictly speaking, the later one is a special
case of the former, but we regard it as another category because it is usually
referred in the context of deep learning \cite{goodfellow2016deeps}.
We discuss the ``canonical form" of generative models, the factor
graph, and show how to convert various classical generative models into this canonical form. The details could be found in the books in Refs. \cite%
{koller2009probabilistics,bishop2006patterns}.

\begin{figure}[tbp]
\centering
\includegraphics[width=0.5\textwidth]{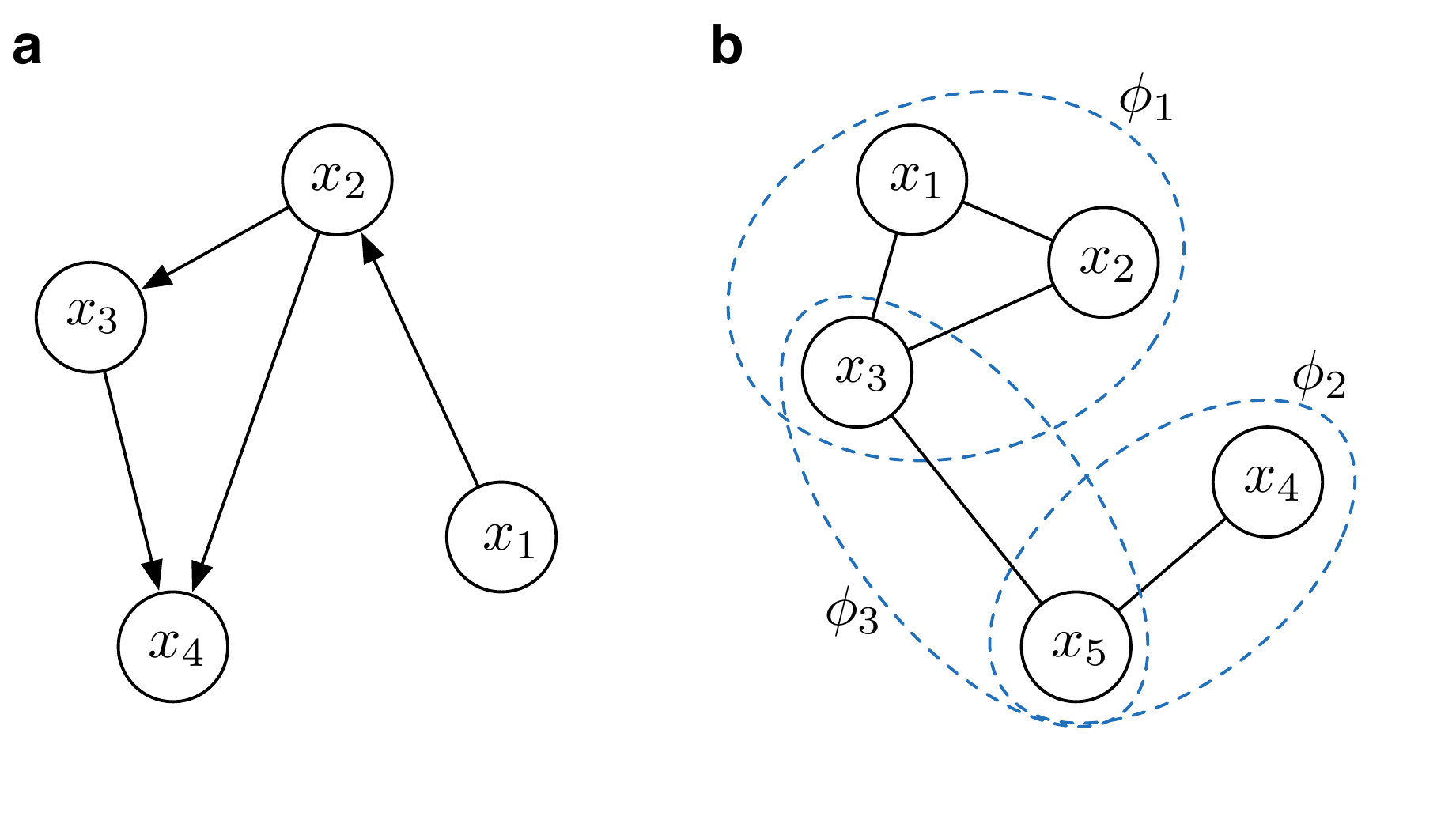}
\caption{ \textbf{Probabilistic Graphical Models. a,} The Bayesian network. The
distribution defined by this figure is $%
P(x_1,x_2,x_3,x_4)=P(x_1)P(x_2|x_1)P(x_3|x_2)P(x_4|x_2,x_3)$. \textbf{b, }%
Markov random field. The distribution defined by this figure is $%
P(x_1,x_2,x_3,x_4)=\protect\phi_1(x_1,x_2,x_3)\protect\phi_2(x_4,x_5)\protect%
\phi_3(x_3,x_5)/Z$ where $Z=\sum_{x_1,x_2,x_3,x_4}P(x_1,x_2,x_3,x_4)$ is the
normalization factor or partition function. Each dashed circle
corresponds to a clique. }
\label{fig:PGM}
\end{figure}


First, we introduce two probabilistic graphical models as shown in
Fig. \ref{fig:PGM}: the Bayesian network (directed graphical model) and the Markov
random field (undirected graphical model). A Bayesian network (Fig. \ref%
{fig:PGM}a) is defined on a directed acyclic graph. For each node $x_i$,
assign a transition probability $P(x_i|\text{parents of }x_i)$ where the parent
means starting point of a directed edge of which the end point is $x_i$. If
there is no parent for $x_i$, assign a probability $P(x_i)$. The total
probability distribution is the product of these conditional probabilities.
A Markov random field (Fig. \ref{fig:PGM}b) is defined on an undirected graph.
The sub-graph in each dashed circle is a clique in which each pair of nodes
is connected. For each clique, assign a non-negative function for each node
variable. The total probability distribution is proportional to the product
of these functions.

\begin{figure}[h]
\centering
\includegraphics[width=1\textwidth]{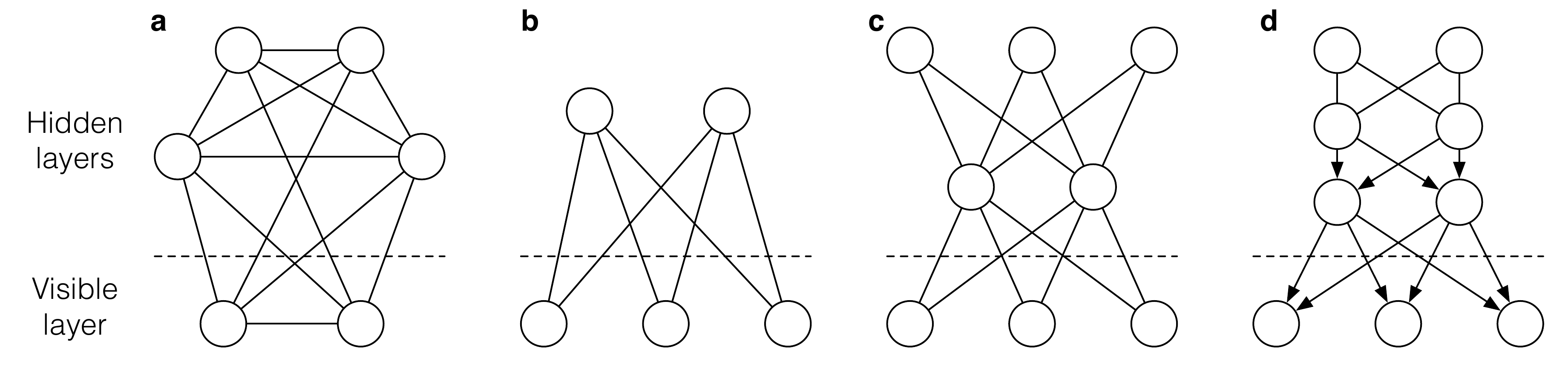}
\caption{ \textbf{Energy-based Neural Networks.} \textbf{a,} Boltzmann
machine. A fully connected graph. The correlation between each pair $x_i$
and $x_j$ is $e^{ax_ix_j+bx_i+cx_j+d}$ with $a,b,c,d$ being real numbers. The
whole distribution is proportional to the product of all the correlation
between each pair of nodes, which is a Gibbs distribution with two-body
interaction and classical Hamiltonians. \textbf{b,} Restricted Boltzmann
Machine. Bipartite graph (graph with only two layers). The correlation is
the same as in \textbf{a} except $a=b=c=d=0$ between the pairs not connected. 
There is no connection between pairs in the same layer.
\textbf{c,} Deep Boltzmann machine. Basically the same as \textbf{b} except
there are more than two hidden layers. \textbf{d,} Deep belief network. A
mixture of undirected and directed graphical model. The top two
layers define a restricted Boltzmann machine. For each node $%
y_j$ as the end point of directed edges, it is assigned a conditional
probability $P(y_j=0|\cdots,x_i,\cdots)=1/(1+e^{\sum_i a_{ij} x_i+b_j})$. The
distribution of the variables in the visible layer is defined as the marginal
probability. }
\label{fig:GNN}
\end{figure}

Then we introduce four typical generative neural networks as shown in Fig. \ref{fig:GNN}:
the Boltzmann machine, the restricted Boltzmann machine, the deep Boltzmann machine, and 
the deep belief network. These neural networks are widely used in deep
learning \cite{goodfellow2016deeps}.

All the above generative models could be represented in the form of factor
graph. Factor graph could be viewed as a canonical form of the generative
models with graph structure. The factor graph representation of the Bayesian
network and the Markov random field simply regards the conditional probability $%
P(x|y)$ and function $\phi(x,y)$ as the factor correlation $f(x,y)$. The factor graph
representations of the Boltzmann machine, the restricted Boltzmann machine and the deep
Boltzmann machine are graphs such that there is a square in each edge
 in the original network with correlation $%
f(x_i,x_j)=e^{ax_ix_j+bx_i+cx_j}$. The factor graph with unbounded degrees
in these cases could be simulated by a factor graph with a constantly bounded degree, which is shown in Fig. \ref{fig:local}. The idea is to add equality
constraints which could be simulated by correlation $%
e^{ax_ix_j-ax_i/2-ax_j/2}$ with $a$ being a very large positive number since
\begin{equation}
e^{ax_ix_j-ax_i/2-ax_j/2}=\delta_{x_ix_j}+e^{-a/2}(1-\delta_{x_ix_j})%
\longrightarrow \delta_{x_ix_j} \text{ as } a\rightarrow+\infty.
\end{equation}
The factor graph representation of the deep belief network is a mixture of
the Bayesian network and the restricted Boltzmann machine except that one correlation in the
part of directed graph involves an unbounded number of variables. This is the conditional probability with one variable conditioned on
the values of variables in the previous layer. It is
\begin{equation}
P(y|\cdots,x_i,\cdots)=\frac{e^{(\sum_ia_{ij} x_i+b_j)y}}{1+e^{\sum_ia_{ij}
x_i+b_j}}.
\end{equation}
where $\{\cdots,x_i,\cdots\}$ are the parents of $y$. Given the values of $%
\cdots,x_i,\cdots$, it's easy to sample $y$ according to $%
P(y|\cdots,x_i,\cdots)$ (which is actually the original motivation for introducing the deep belief network). Thus the process could be represented by a Boolean circuit with random coins. In fact, a circuit with random coins is a
Bayesian network: each logical gate is basically a conditional probability
(for example, the AND gate $y=x_1\land x_2$ could be simulated by the
conditional probability $P(y|x_1x_2)=\delta_{y,x_1\cdot x_2}$). So this
conditional probability could be represented by a Bayesian network in which
the degree of each node is bounded by a constant. Thus deep belief network
could be represented by a factor graph with a constant degree.
\begin{figure}[tbp]
\centering
\includegraphics[width=0.5\textwidth]{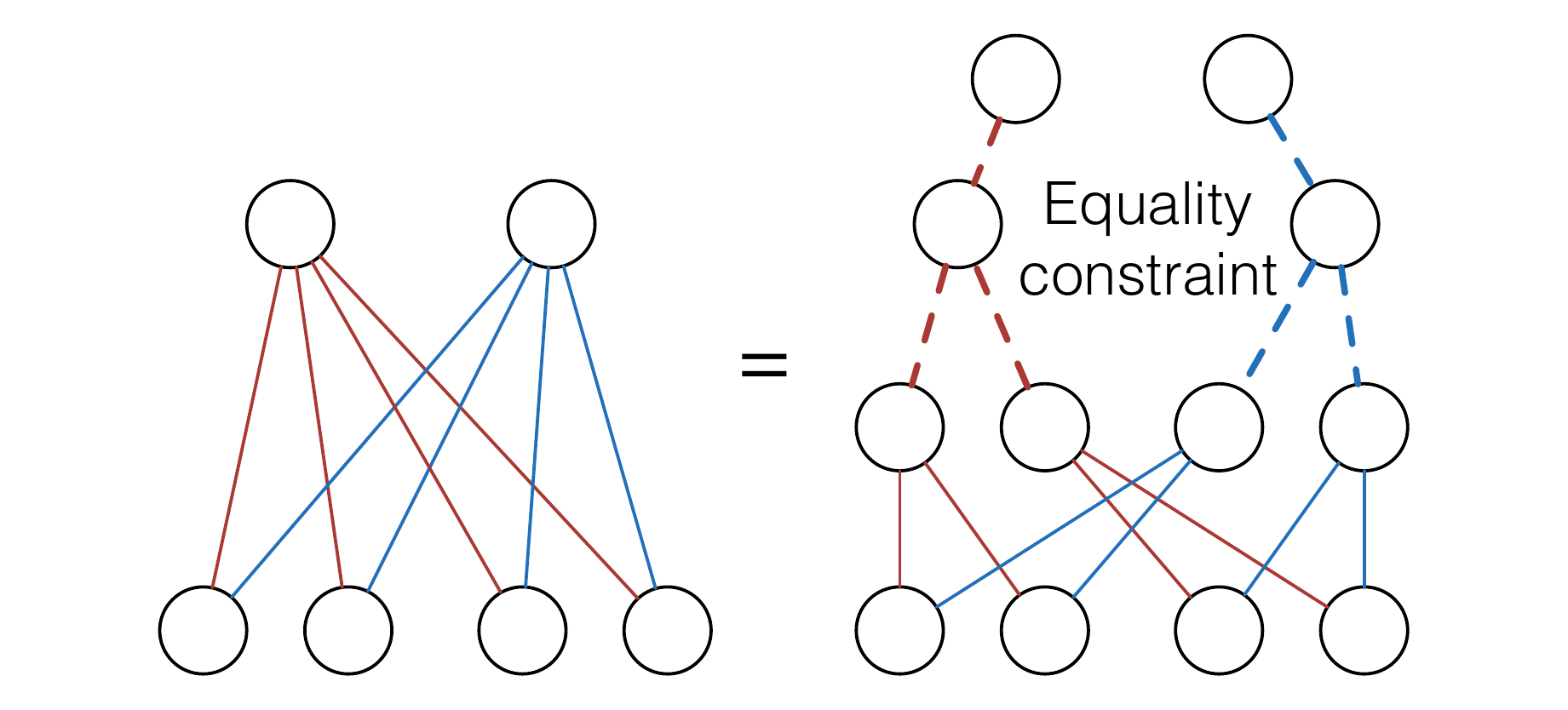}
\caption{ \textbf{Simulating graphs of unbounded degrees with
graphs of constantly bounded degrees.} In the case that all the correlations involve only two variables (e.g.,
for Boltzmann machines), the correlation between one node and $l$ other nodes
could be simulated by a binary tree structure with depth $\log l$. The newly
added nodes and connections are marked by the dashed circles and lines,
respectively. The correlation in the dashed line is an equality constraint
which could be approximated by $e^{ax_ix_j-ax_i/2-ax_j/2}$ with $a$ being a
very large positive number.  }
\label{fig:local}
\end{figure}

One important property is that the conditional probability of a factor graph
is still factor graph. Actually, the correlation becomes simpler: the number
of variables involved does not increase, which means approximately computing
the conditional probability of probabilistic graphical models could be
reduced to preparing the state $|Q\rangle$ (which will be shown in next section 
). Then we arrive at the following lemma:

\begin{lemma}
\label{lem:FG}\emph{\ All the probability distributions or conditional
probability distributions of probabilistic graphical models and energy-based
neural networks can be represented efficiently by the factor graph with a
constant degree.}
\end{lemma}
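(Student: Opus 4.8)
The plan is to prove Lemma~\ref{lem:FG} model class by model class, in each case exhibiting an explicit factor-graph encoding and then forcing the degree down to a constant. For the two probabilistic graphical models the encoding is immediate: in a Bayesian network, attach to each vertex $x_i$ together with its parent set the factor $P(x_i\mid\mathrm{pa}(x_i))$ (or $P(x_i)$ if $x_i$ is a root), and in a Markov random field take each clique potential $\phi$ as a factor; the product of the factors reproduces the joint law by definition. In a Boltzmann machine, restricted Boltzmann machine, or deep Boltzmann machine, put a two-variable factor $f(x_i,x_j)=e^{ax_ix_j+bx_i+cx_j}$ on every edge, so the factor graph inherits the network's adjacency. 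In all of these the factors themselves have bounded arity in the models used in practice, so the only obstruction to constant degree is that a single variable may sit in arbitrarily many factors (a dense Boltzmann machine being the extreme case).

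I would remove that obstruction with the copy-tree gadget of Fig.~\ref{fig:local}: for a variable appearing in $\ell$ factors, introduce a binary tree of depth $\lceil\log_2\ell\rceil$ whose $\ell$ leaves each feed one of the $\ell$ factors, and impose equality along the tree edges. Exact equality is not a strictly positive function, so I realize it only approximately via $e^{ax_ix_j-ax_i/2-ax_j/2}=\delta_{x_ix_j}+e^{-a/2}(1-\delta_{x_ix_j})\to\delta_{x_ix_j}$ as $a\to+\infty$ --- this is exactly the ``arbitrarily high precision'' already invoked in Theorem~\ref{theorem1}. Each offending variable costs $O(\ell)$ new variables, so the total size stays polynomial, and every node of the rebuilt graph touches at most three factors.

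The one genuinely non-pairwise object is the directed part of a deep belief network, where a node $y$ is tied to an unbounded number of parents by $P(y\mid\cdots,x_i,\cdots)=e^{(\sum_i a_{ij}x_i+b_j)y}/(1+e^{\sum_i a_{ij}x_i+b_j})$. Here I would use that sampling $y$ from this sigmoid, given the parents, is an efficient randomized computation, hence expressible as a polynomial-size Boolean circuit driven by fair coins; a Boolean circuit with random inputs is itself a Bayesian network whose gates are bounded-fan-in conditionals (e.g.\ $P(y\mid x_1x_2)=\delta_{y,x_1\wedge x_2}$ for an AND), which the first two paragraphs already convert to a constant-degree factor graph. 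Substituting this gadget for each sigmoid conditional and combining it with the restricted-Boltzmann-machine top layer finishes the deep belief network.

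Finally I would note that conditioning and marginalization preserve the conclusion: fixing the variables in a set $z$ to a binary string just substitutes constants into the factors that touch $z$, which can only decrease their arity, while marginalizing over a set $y$ is nothing but declaring those variables hidden in the same factor graph. Hence each conditional $\sum_y p(x,y\mid z)$ of the above models is again a constant-degree factor graph, which is what the lemma claims, and it is also precisely the reduction that lets the inference problem be phrased as preparing the state $|Q\rangle$ in the next section. The main obstacle is the unbounded-degree phenomenon in its two forms --- dense pairwise couplings and many-parent sigmoid conditionals --- and the two gadgets above (approximate copy trees, and the circuit-as-Bayesian-network reduction) are the real content; everything else is bookkeeping.
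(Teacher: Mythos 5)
Your proposal is correct and follows essentially the same route as the paper: direct factorization for Bayesian networks, Markov random fields, and the pairwise Boltzmann-machine family; the approximate-equality binary copy tree $e^{ax_ix_j-ax_i/2-ax_j/2}\to\delta_{x_ix_j}$ to bound the degree; the circuit-with-random-coins-as-Bayesian-network reduction for the many-parent sigmoid conditionals of deep belief networks; and the observation that conditioning only substitutes constants into factors. The only addition is that you spell out the conditioning/marginalization step slightly more explicitly than the paper's one-line remark, which is harmless.
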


\section{Parent Hamiltonian of the State $|Q\rangle $}

We consider tensor network representation of $|Q\rangle $ defined on a graph with a constant
degree.
\begin{equation}
|Q\rangle \equiv M_{1}\otimes \cdots \otimes M_{n}|G\rangle
\end{equation}%
where $|G\rangle $ is the graph state that is the unique ground state of the
frustration-free local Hamiltonian with zero ground-state energy:
\begin{equation}
H_{G}=\sum_{i}H_{i}=\sum_{i}\frac{I-\left( \bigotimes_{j\in \{i\text{'s
neighbors}\}}Z_{j}\right) \otimes X_{i}}{2}
\end{equation}%
where each $H_{i}$ is the projector of the stabilizer \cite%
{gottesman1997stabilizers,raussendorf2001ones} supported only on the node $i$
and its neighbors, thus being local since the degree of the graph is a constant.
And
\begin{equation}
H_{i}|G\rangle =0.
\end{equation}

Next we construct a Hamiltonian:
\begin{equation}
H_{Q}=\sum_{i}H_{i}^{\prime }=\sum_{i}\left( \bigotimes_{j\in \{i\}\cup \{i%
\text{'s neighbors}\}}M_{j}^{-1}\right) ^{\dag }H_{i}\left( \bigotimes_{j\in
\{i\}\cup \{i\text{'s neighbors}\}}M_{j}^{-1}\right) .
\end{equation}%
First, we show that $|Q\rangle $ is the ground state of $H_{Q}$. Since $H_{i}
$ is positive semidefinite, thus $H_{i}^{\prime }$ is also positive
semidefinite, the eigenvalue of $H_{Q}$ is no less than $0$.
\begin{eqnarray}
\langle Q|H_{i}^{\prime }|Q\rangle  &=&\langle Q|\left( \bigotimes_{j\in
\{i\}\cup \{i\text{'s neighbors}\}}M_{j}^{-1}\right) ^{\dag }H_{i}\left(
\bigotimes_{j\in \{i\}\cup \{i\text{'s neighbors}\}}M_{j}^{-1}\right)
|Q\rangle   \notag \\
&=&\langle G|\left( \bigotimes_{k\in \text{all the nodes except for }j\text{s%
}}(M_{k}M_{k}^{\dag })^{-1}\right) \otimes H_{i}|G\rangle   \notag \\
&=&0,
\end{eqnarray}%
so $\langle Q|H_{Q}|Q\rangle =0$ which means $|Q\rangle $ is the ground
state of $H_{Q}$. Then we prove $|Q\rangle $ is the unique ground state.
Suppose $|Q^{\prime }\rangle =M_{1}\otimes \cdots \otimes M_{n}|G^{\prime
}\rangle $ satisfying $\langle Q^{\prime }|H_{Q}|Q^{\prime }\rangle =0$ for
some state $|G^{\prime }\rangle $, which implies $\langle Q^{\prime
}|H_{i}^{\prime }|Q^{\prime }\rangle =0$ for every $i$. So
\begin{eqnarray}
\langle Q^{\prime }|H_{i}^{\prime }|Q^{\prime }\rangle  &=&\langle Q^{\prime
}|\left( \bigotimes_{j\in \{i\}\cup \{i\text{'s neighbors}%
\}}M_{j}^{-1}\right) ^{\dag }H_{i}\left( \bigotimes_{j\in \{i\}\cup \{i\text{%
's neighbors}\}}M_{j}^{-1}\right) |Q^{\prime }\rangle   \notag \\
&=&\langle G^{\prime }|H_{i}\otimes \left( \bigotimes_{k\in \text{all the
nodes except for }j\text{s}}(M_{k}M_{k}^{\dag })^{-1}\right) |G^{\prime
}\rangle   \notag \\
&=&\left\vert H_{i}\otimes \left( \bigotimes_{k\in \text{all the nodes
except for }j\text{s}}M_{k}^{-1}\right) |G^{\prime }\rangle \right\vert _{2}
\notag \\
&=&0,
\end{eqnarray}%
which implies
\begin{equation}
H_{i}\otimes \left( \bigotimes_{k\in \text{all the nodes except for }i\text{
and }j\text{s}}M_{k}^{-1}\right) |G^{\prime }\rangle =0\Longrightarrow
H_{i}|G^{\prime }\rangle =0
\end{equation}%
for every $i$, thus $|G\rangle =|G^{\prime }\rangle $. This proves the
uniqueness of $|Q\rangle $ as the ground state of $H_{Q}$.

\section{Proof of Theorem 2}

In this section, we use computational complexity theory to prove the
exponential improvement on representational power of the QGM over any
factor graph in which each correlation is easy to compute given a specific
value for all variables (including all the models mentioned above). The 
discussion of the relevant computational complexity theory could be
found in the book in Ref. \cite{arora2009computationals} or in the recent  review article 
on quantum supremacy \cite{Harrow2017s}.



To prove theorem 2, first we define the concept of multiplicative error. Denote the probability distribution produced by the QGM as $\{q(x)\}$. Then we
ask whether there exists a factor graph such that its distribution $\{p(x)\}$
approximates $\{q(x)\}$ to some error. It is natural to require that if $q(x)$ is very small, $p(x)$ should also be very small, which means rare events should still be rare. So we define the
following error model:
\begin{definition}[Multiplicative Error]
\emph{\ Distribution $\{p(x)\}$ approximates distribution $\{q(x)\}$ to
multiplicative error means
\begin{equation}
|p(x)-q(x)|\le\gamma q(x)
\end{equation}
for any $x$, where $\gamma=\Omega(1/\mbox{poly}(n))<1/2$.}
\end{definition}
\noindent This error can also guarantee that any local behavior is approximately the
same since the $l_{1}$-distance can be bounded by it:
\begin{equation}
\sum_{x}|p(x)-q(x)|\leq \sum_{x}\gamma q(x)=\gamma .
\end{equation}%
But only bounding $l_{1}$-distance cannot guarantee that rare event is still rare.
Multiplicative error for small $\gamma $ implies that the KL-divergence is
bounded by
\begin{eqnarray}
\left\vert \sum_{x}q(x)\log \frac{p(x)}{q(x)}\right\vert  &\leq
&\sum_{x}q(x)\log \left( 1+\left\vert \frac{p(x)}{q(x)}-1\right\vert \right)
\notag \\
&\leq &\sum_{x}q(x)\gamma =\gamma .
\end{eqnarray}%

The probability $p(x|z)$ of a factor graph can be written as
\begin{equation}
p(x|z)=\frac{\sum_yp(x,y,z)}{\sum_{x,y}p(x,y,z)}=\frac{\sum_yf(x,y)}{%
\sum_{x,y}f(x,y)}
\end{equation}
where $f(x,y)$ is product of a polynomial number of relatively simple
correlations, thus non-negative and can be computed in polynomial time and $y$
are hidden variables.

The probability $q(x)$ of the QGM can be written as
\begin{equation}
q(x)=\frac{\sum_yg(x,y)}{\sum_{x,y}g(x,y)}
\end{equation}
where $g(x,y)$ is product of a polynomial number of tensors given a specific
assignment of indices and $y$ are the virtual indices or physical indices of
hidden variables, $x$ is the remaining physical indices. Different from $%
f(x,y)$, $g(x,y)$ is a complex number in general. Thus in some sense, we can
say $q(x)$ is the result of quantum interference in contrast to the case of $%
p(x)$ (or $p(x|z)$) which is only summation of non-negative numbers. Since $%
q(x)$ is summation of complex numbers, in the process of summation, it can
oscillate dramatically, so we expect $q(x)$ being more complex than $p(x)$ (or $%
p(x|z)$). This can be formalized as the following lemma.

\begin{lemma}[Stockmeyer's theorem \protect\cite{stockmeyer1976polynomials}]
\emph{\ There exists an $\mathsf{FBPP}^\mathsf{NP}$ algorithm which can
approximate
\begin{equation}
P=\Pr_t[f(t)=1]=\frac{1}{2^r}\sum_{t\in\{0,1\}^r}f(t)
\end{equation}
by $\widetilde P$, for any boolean function $f:\{0,1\}^r\rightarrow \mathbb{R%
}^+\cup \{0\}$, to multiplicative error $|\widetilde P-P|\le P/\mbox{poly}(n)
$ if $f(t)$ can be computed efficiently given $t$.}
\end{lemma}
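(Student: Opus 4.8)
This is the classical approximate-counting theorem of Stockmeyer, so one legitimate option is simply to cite \cite{stockmeyer1976polynomials} (or the textbook treatment in \cite{arora2009computationals}); for completeness I would sketch the proof. The first step is to reduce the weighted sum to an \emph{unweighted} count. Since $f(t)$ is non-negative and polynomial-time computable, it is available to $\mbox{poly}(n)$ bits, so for a suitable $s=\mbox{poly}(n)$ the truncation $\lfloor 2^{s}f(t)\rfloor=\sum_{u\in\{0,1\}^{s}}g(t,u)$ for a polynomial-time Boolean $g$ (set $g(t,u)=1$ iff $u$, read as an integer, is below $\lfloor 2^{s}f(t)\rfloor$), which changes $2^{r}P=\sum_{t}f(t)$ into $2^{-s}\sum_{t,u}g(t,u)$ up to relative error $1/\mbox{poly}(n)$. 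Hence it suffices to estimate $N:=|\{w:g(w)=1\}|$ to multiplicative error $1/\mbox{poly}(n)$ for a polynomial-time Boolean $g$ on $\mbox{poly}(n)$ bits, using randomness and an $\mathsf{NP}$ oracle.

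The heart is the standard hashing argument. Fix a pairwise-independent family of hash functions $h:\{0,1\}^{\ell}\to\{0,1\}^{k}$ on the input length $\ell$ of $g$. For $S=\{w:g(w)=1\}$ and a uniformly random $h$, the count $X_{k}=|\{w\in S:h(w)=0^{k}\}|$ has mean $|S|2^{-k}$ and, by pairwise independence, variance at most its mean; so when $2^{k}$ is slightly below $|S|$ and the mean is of order $1/\epsilon^{2}$, a Chebyshev bound gives $X_{k}$ concentrated about $|S|2^{-k}$ to relative error $\epsilon$ with good probability. The key point is that, for any $m=\mbox{poly}(1/\epsilon)$, the predicate ``there exist $m$ distinct $w_{1},\dots,w_{m}$ with $g(w_{i})=1$ and $h(w_{i})=0^{k}$'' is an $\mathsf{NP}$ predicate (guess the $w_{i}$, verify), hence answerable by one oracle query. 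Sweeping the $\mbox{poly}(n)$ values of $k$, using a logarithmic number of oracle queries per $k$ to binary-search the exact value of $X_{k}$, one finds the regime where $X_{k}$ is $\Theta(1/\epsilon^{2})$ and outputs $\widetilde N=X_{k}2^{k}$. Taking $\epsilon=1/\mbox{poly}(n)$ and amplifying over independent draws of $h$ by majority vote gives $|\widetilde N-N|\le N/\mbox{poly}(n)$ with exponentially small failure probability; undoing the first-paragraph reduction yields $\widetilde P$ with $|\widetilde P-P|\le P/\mbox{poly}(n)$. The only nondeterminism is the choice of the $h$'s, so the procedure lies in $\mathsf{FBPP}^{\mathsf{NP}}$.

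The step I expect to require the most care is the concentration-plus-oracle combination that upgrades a mere decision (``is $N$ large?'') into a genuine multiplicative estimate: one must check that pairwise independence suffices for the Chebyshev bound at the relevant scale (or pass to $O(\log n)$-wise independence for Chernoff-type tails if sharper failure probabilities are wanted), that the ``$\ge m$ witnesses'' query stays in $\mathsf{NP}$, and that the weighted-to-unweighted error is absorbed into the $1/\mbox{poly}(n)$ budget. In the context of this paper the only additional thing to verify is the hypothesis of the lemma itself --- that the functions $f$ arising from conditional probabilities of constant-degree factor graphs are non-negative and polynomial-time computable given the variable assignment --- which holds for all the standard classical generative models by Lemma~\ref{lem:FG} together with the constant-degree reduction, since each correlation then involves only $O(1)$ variables.
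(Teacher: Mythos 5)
The paper offers no proof of this lemma at all --- it is stated as a known result with only the citation to Stockmeyer --- so there is nothing in the paper to diverge from, and your sketch is the standard discretize-then-hash argument (reduce the non-negative weighted sum to an unweighted count, then pairwise-independent hashing with bucket size $\Theta(1/\epsilon^{2})$, Chebyshev, and binary search on the $\mathsf{NP}$ oracle), which is correct and is exactly how this result is proved in, e.g., Stockmeyer's paper and the Aaronson--Arkhipov treatment. Two cosmetic points only: since $f$ is polynomial-time computable its output has $\mbox{poly}(n)$ bits, so the truncation $\lfloor 2^{s}f(t)\rfloor$ can be made exact rather than introducing a $1/\mbox{poly}(n)$ error to absorb; and amplification of a real-valued estimate over independent hash draws should be by taking the median rather than a ``majority vote.''
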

\noindent
$\mathsf{FBPP}^\mathsf{NP}$ algorithms denote algorithms that a probabilistic
Turing machine, supplied with an oracle that can solve all the NP problems
in one step, can run in polynomial time.

Without the constraint $f(t)\ge0$, approximating $P$ by $\widetilde P$ such
that $|P-\widetilde P|\le\gamma P$ is in general \textsf{\#P}-hard even if $%
\gamma<1/2$. Roughly speaking, the complexity class \textsf{\#P} \cite 
{valiant1979complexitys} includes problems counting the number of
witnesses of an NP problem, which is believed much harder than NP (see Ref.  
\cite{Harrow2017s} for a quantum computing oriented introduction). The above
lemma shows that summation of non-negative numbers is easier than complex
numbers in general. This formulates that quantum interference is more
complex than classical probability superposition.

Stockmeyer's theorem was firstly used to separate classical and quantum
distribution in Ref. \cite{aaronson2011computationals} where the classical
distribution is sampled by a probabilistic Turing machine in polynomial time
and the quantum distribution is sampled from linear optics network. In some
sense, our result could be viewed as a development of this result: the
classical distribution is not necessarily sampled by a classical device
efficiently, instead, the distribution could be approximated by a factor graph
to multiplicative error. An efficient classical device is a special case of
factor graph because it could be represented as a Boolean circuit with random coins, this could be represented as 
the Bayesian network as we have discussed above about the representation of deep belief
network by factor graph. 

Then we give the proof of theorem 2:
\begin{proof}
Assume there exists a factor graph from which we can compute the conditional probability $p(x|z)$,
we will prove that approximating $q(x)$ to multiplicative error, i.e., $|p(x|z)-q(x)|\le\gamma q(x)$, is in $\mathsf{FBPP}^\mathsf{NP}/\poly$ (the meaning of this complexity class will be explained later).

Suppose $f$ is defined as $p(x,y|z)$, there exists an $\mathsf{FBPP}^\mathsf{NP}$ algorithm approximating $P_1=\sum_{x,y}f(x,y)$ by $\widetilde P_1$ such that $|P_1-\widetilde P_1|\le\gamma_1P_1$ and $P_2=\sum_{y}f(x,y)$ by $\widetilde P_2$ such that $|P_2-\widetilde P_2|\le\gamma_2P_2$. Define $\widetilde p(x)=\widetilde P_2/\widetilde P_1$ and we have $p(x|z)=P_2/P_1$
\begin{eqnarray}
\nonumber|\widetilde p(x)-p(x|z)|&=&\left| \frac{\widetilde P_2}{\widetilde P_1}-\frac{ P_2}{ P_1}  \right|\\
\nonumber&\le&\left| \frac{\widetilde P_2}{\widetilde P_1}-\frac{ P_2}{\widetilde P_1}  \right|+\left| \frac{ P_2}{\widetilde P_1}- \frac{ P_2}{P_1} \right|\\
\nonumber&=&\frac{|\widetilde P_2-P_2|}{\widetilde P_1}+P_2\left| \frac{1}{\widetilde P_1}-\frac{1}{P_1}  \right|\\
\nonumber&=&\frac{|\widetilde P_2-P_2|}{\widetilde P_1}+\frac{P_2}{\widetilde P_1 P_1}|\widetilde P_1-P_1|\\
\nonumber&\le&(\gamma_1+\gamma_2)\frac{P_2}{\widetilde P_1}\\
\nonumber&\le& \frac{\gamma_1+\gamma_2}{1-\gamma_1}\frac{P_2}{P_1}\\
&=&\frac{\gamma_1+\gamma_2}{1-\gamma_1} p(x|z),
\end{eqnarray}
then
\begin{equation}
|\widetilde p(x)-q(x)|\le|\widetilde p(x)-p(x|z)|+|p(x|z)-q(x)|\le\frac{\gamma_1+\gamma_2}{1-\gamma_1} p(x|z)+\gamma q(x)\le\frac{\gamma+\gamma_1+\gamma_2+\gamma_2\gamma}{1-\gamma_1}q(x)<\frac{1}{2}q(x),
\end{equation}
the last step is because we choose $\gamma_1$ and $\gamma_2$ as sufficiently small as $1/\poly(n)$. Under the assumption that the representation is efficient, such $f(x,y)$ can be represented by a polynomial size circuit, where the circuit corresponds to the description of the factor graph. So $\widetilde p(x)$ can be computed in $\mathsf{FBPP}^\mathsf{NP}/\poly$. ``/\poly" is because we do not need to construct the circuit efficiently \cite{karp1982turings}.

\begin{figure}
\centering
\includegraphics[width=0.5\linewidth]{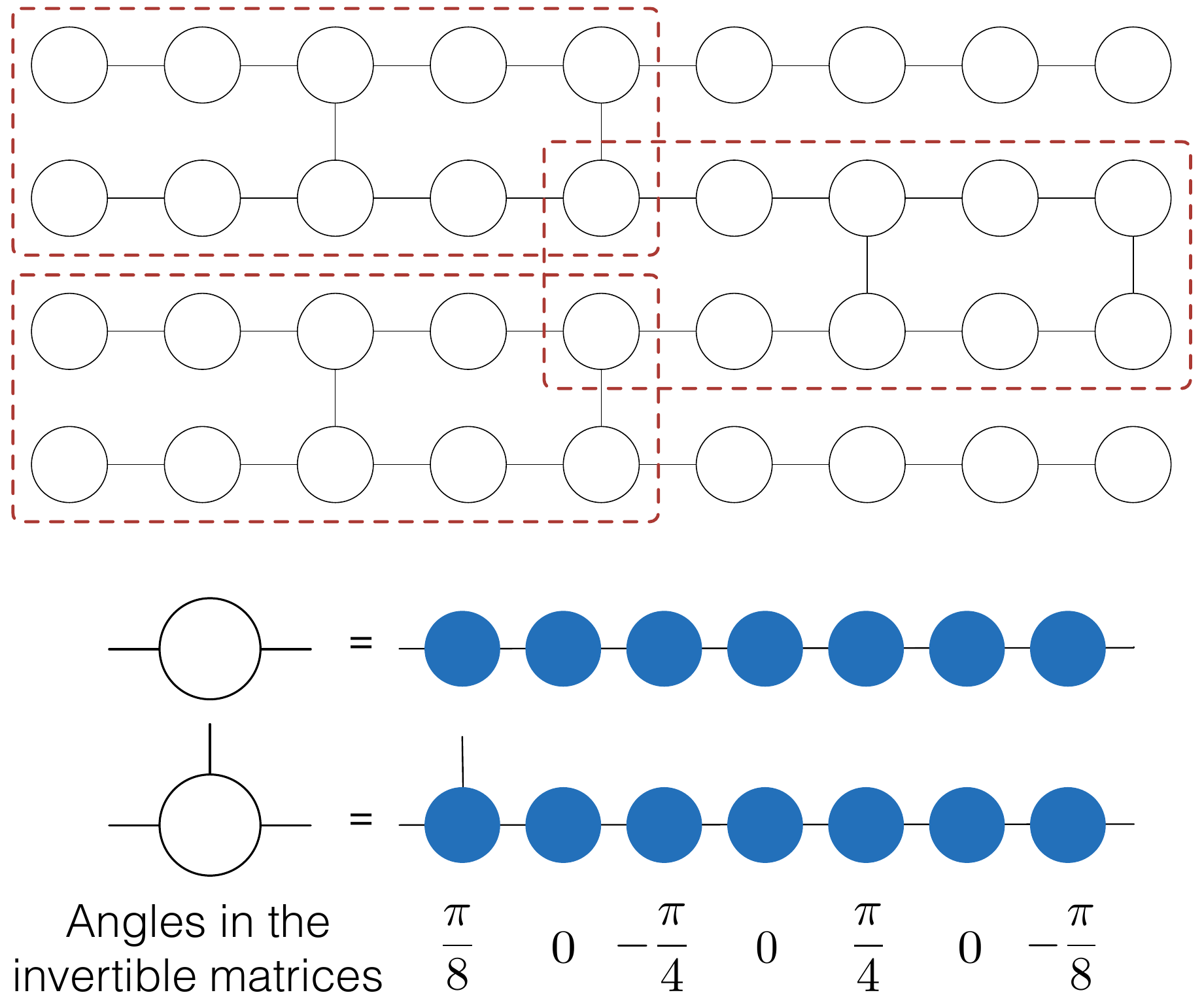}\caption{\textbf{\#P-hardness for QGM.} The state $\ket Q$ in Ref. \cite{gao2016quantums}.
To construct this state, we start from a brickwork of white circles \cite{broadbent2009universals} (the top side), with each white circle
representing seven blue circles. Each blue circle represents a qubit. Then we apply $HZ(\theta)$ which is clearly an invertible matrix on each qubit with the angle $\theta$ shown at the bottom.}%
\label{fig:qs}%
\end{figure}

In Ref. \cite{gao2016quantums}, we introduced a special form of QGM that corresponds to a graph state $\ket Q$ (Fig. \ref{fig:qs}) with one layer of invertible matrices $HZ(\theta)$ such that computing $q(x)$ to multiplicative error with $\gamma<1/2$ is at least $\mathsf{FP}^\mathsf{\#P}$. So assuming the efficient representation of QGM by factor graph, we will get
\begin{equation}
\mathsf{P}^\mathsf{\#P}\subseteq  \mathsf{BPP}^\mathsf{NP}/\poly.
\end{equation}
Then it follows basically the same reasoning as the proof of theorem 3 in Ref. \cite{aaronson2017implausibilitys} except they consider $\mathsf{P}^\mathsf{\#P}\subseteq\mathsf{BPP}^{\mathsf{NP}^\mathsf{NP}}/\poly$ and the result is polynomial hierarchy collapse to the fourth level.
According to Toda's theorem \cite{toda1989computationals}, \textsf{PH}$\subseteq\mathsf{P}^\mathsf{\#P}$, this implies $\mathsf{NP}^\mathsf{NP}\subseteq\mathsf{BPP}^\mathsf{NP}/\poly$. According to Adleman's result \cite{adleman1978twos}, \textsf{BPP}$\subseteq$\textsf{P}/\poly, relativizes, which means $\mathsf{NP}^\mathsf{NP}\subseteq \mathsf{P}^\mathsf{NP}/\poly$. Karp-Lipton theorem \cite{karp1982turings} states if $\textsf{NP}\subset \textsf{P}/\poly$, then $\Sigma^p_2\subseteq\Pi^p_2$ (polynomial hierarchy collapse to the second level); the result is also relativizing then it follows if $\mathsf{NP}^\mathsf{NP}\subseteq \mathsf{P}^\mathsf{NP}/\poly$, then ${\Sigma^p_2}^\textsf{NP}\subseteq{\Pi^p_2}^\textsf{NP}$ which means $\Sigma^p_3\subseteq\Pi^p_3$ (polynomial hierarchy collapse to the third level).
\end{proof}

\section{Training and inference in quantum generative model}

\label{sec:computation_on_our_model} In this section, we discuss how to train the QGM and make inference on it. First, we briefly review how
to train and make inference on some typical factor graphs.
Then we reduce inference on the QGM to preparation of a tensor network
state. Finally, we derive the formula for computing gradient of the KL-divergence
of QGM and reduce it to the preparation of a tensor network state.

Inference problems on probabilistic graphical models include
computing marginal probability $\sum_yp(x,y)$ and conditional probability $%
\sum_yp(x,y|z)$ (which includes marginal probability as a special case when the set $z$ is empty). To approximately compute the probability on
some variables, we sample the marginal probability $p(x,y)$ or the conditional probability $p(x,y|z)$, and then measure the
variable set $x$. We use the Boltzmann machine as an example to show how to train energy-based neural networks. The KL-divergence given $M$ data is
\begin{equation}
\frac{1}{M}\sum_{v\in \text{data set}}\log p(v).
\end{equation}
The training is to optimize this quantity with the gradient descent
method. The gradient (for simplicity, we only present $\partial_a$ between
hidden and visible nodes) is
\begin{eqnarray}
\frac{1}{M} \partial_{a_{ij}}\sum_{v\in \text{data set}}\log p(v)&=&\frac{1}{M}\sum_{v\in \text{data set}}\sum_hP(h|v)h_iv_j-\sum_{h,v}P(h,v)h_iv_j
\notag \\
&=&\langle h_iv_j\rangle_\text{data}-\langle h_iv_j\rangle_\text{model}.
\end{eqnarray}
The subscript ``data" denotes distribution with probability $1/M$ randomly chosen from
the training data according to $P(h|v)=P(h,v)/P(v)$, where $P(h,v)$ is the
distribution defined by the graphical model, randomly sampling hidden
variables $h$. The distribution ``model" is $P(h,v)$. So the training is
reduced to sampling some distribution or conditional distribution defined by the generative neural network and then estimating the
expectation value of local observables. Since the QGM could represent
conditional probability of these models and the corresponding state $%
|Q\rangle$ is the unique ground state of local Hamiltonian, inference and training could be reduced to the ground
state preparation problem.

Similarly, inference on the QGM could also be reduced to preparation of a
quantum state. As an example, let us compute the marginal probability for the QGM:
\begin{equation}
q(x)=\sum_yq(x,y)=\sum_y \frac{\langle Q|x,y\rangle\langle x,y|Q\rangle}{\innerp{Q}{Q}}=\frac{\langle
Q|(|x\rangle\langle x|\otimes I)|Q\rangle}{\innerp{Q}{Q}}=\frac{\langle Q|(O\otimes I)|Q\rangle}{\innerp{Q}{Q}}.
\end{equation}
So the problem is reduced to preparing the state $|Q\rangle$ and then measuring
the local observable $O$. Similarly, the conditional probability is given by
\begin{equation}
q(x|z)=\sum_y\frac{q(x,y,z)}{q(z)}=\frac{\langle Q(z)|(|x\rangle\langle x|\otimes
I)|Q(z)\rangle}{\innerp{Q(z)}{Q(z)}}=\frac{\langle Q(z)|(O\otimes I)|Q(z)\rangle}{\innerp{Q(z)}{Q(z)}},
\end{equation}
where
\begin{equation}
|Q(z)\rangle\equiv (I\otimes \langle z|)|Q\rangle
\end{equation}
is a tensor network state.

The KL-divergence of the QGM is given by
\begin{equation}
\frac{1}{M}\sum_{v\in\text{data set}}\log{\langle Q(v)|Q(v)\rangle}%
-\log\langle Q|Q\rangle,
\end{equation}
and its derivative with respect to a parameter $\theta_i$ in $M_i$ is
\begin{eqnarray}
\partial_{\theta_i}\left(\frac{1}{M}\sum_{v\in\text{data set}}\log{\langle
Q(v)|Q(v)\rangle}-\log\langle Q|Q\rangle \right)&=&\frac{1}{M} \sum_{v\in%
\text{data set}}\frac{\partial_{\theta_i}\langle Q(v)|Q(v)\rangle}{\langle
Q(v)|Q(v)\rangle}-\frac{\partial_{\theta_i}\langle Q|Q\rangle}{\langle
Q|Q\rangle}.
\end{eqnarray}
Let us consider the second term first.
\begin{eqnarray}
\frac{\partial_{\theta_i}\langle Q|Q\rangle}{\langle Q|Q\rangle}&=&\frac{%
\partial_{\theta_i} \langle G | M_1^\dag M_1\otimes \cdots \otimes M_n^\dag
M_n|G \rangle }{\langle Q|Q\rangle}  \notag \\
&=&\frac{\langle G | M_1^\dag M_1\otimes \cdots
\otimes\partial_{\theta_i}M_i^\dag M_i\otimes \cdots \otimes M_n^\dag M_n|G
\rangle }{\langle Q|Q\rangle}  \notag \\
&=&\frac{\langle G | M_1^\dag M_1\otimes \cdots \otimes \left(M_i^\dag
(\partial_{\theta_i}M_i)+(\partial_{\theta_i}M_i^\dag)M_i\right) \otimes
\cdots \otimes M_n^\dag M_n|G \rangle }{\langle Q|Q\rangle}  \notag \\
&=&\frac{\langle G | M_1^\dag M_1\otimes \cdots \otimes \left(M_i^\dag
(\partial_{\theta_i}M_i)M_i^{-1}M_i+M_i^\dag
M_i^{\dag-1}(\partial_{\theta_i}M_i^\dag)M_i\right) \otimes \cdots \otimes
M_n^\dag M_n|G \rangle }{\langle Q|Q\rangle}  \notag \\
&=&\frac{\langle Q | (\partial_{\theta_i}M_i)M_i^{-1}+
M_i^{\dag-1}(\partial_{\theta_i}M_i^\dag) |Q\rangle}{\langle Q|Q\rangle}.
\end{eqnarray}
It is basically the same for $|Q(v)\rangle$ if $\theta_i$ is the parameter
for the unconditioned qubit:
\begin{equation}
\frac{\partial_{\theta_i}\langle Q(v)|Q(v)\rangle}{\langle Q(v)|Q(v)\rangle}=%
\frac{\langle Q(v) |(\partial_{\theta_i}M_i)M_i^{-1}+
M_i^{\dag-1}(\partial_{\theta_i}M_i^\dag) |Q(v)\rangle}{\langle
Q(v)|Q(v)\rangle}.
\end{equation}
If $\theta_i$ is the parameter of conditioned variables, it is more
complicated. Let $|R_i\rangle=M_i^{-1}(I\otimes \langle
v/v_i|)|Q(v)\rangle$ which is independent of $\theta_i$ and $(I\otimes
\langle v_i|M_i)|R_i\rangle=|Q(v)\rangle$, $(I\otimes
M_i)|R_i\rangle=|Q(v/v_i)\rangle$, where $v/v_i$ denote
all variables in $v$ except $v_i$.
\begin{eqnarray}
\frac{\partial_{\theta_i}\langle Q(v)|Q(v)\rangle}{\langle Q(v)|Q(v)\rangle}%
&=&\frac{\langle R_i| \partial_{\theta_i}(M_i^\dag|v_i\rangle\langle
v_i|M_i) |R_i\rangle}{\langle Q(v)|Q(v)\rangle}  \notag \\
&=& \frac{\langle R_i| (\partial_{\theta_i}M_i^\dag)|v_i\rangle\langle
v_i|M_i+M_i^\dag|v_i\rangle\langle v_i|(\partial_{\theta_i}M_i) |R_i\rangle}{%
\langle Q(v)|Q(v)\rangle}  \notag \\
&=& \frac{\langle R_i| M_i^\dag
M_i^{\dag-1}(\partial_{\theta_i}M_i^\dag)|v_i\rangle\langle
v_i|M_i+M_i^\dag|v_i\rangle\langle v_i|(\partial_{\theta_i}M_i)M_i^{-1}M_i
|R_i\rangle}{\langle Q(v)|Q(v)\rangle}  \notag \\
&=& \frac{\langle Q(v/v_i)|
M_i^{\dag-1}(\partial_{\theta_i}M_i^\dag)|v_i\rangle\langle
v_i|+|v_i\rangle\langle v_i|(\partial_{\theta_i}M_i)M_i^{-1} |Q(v/v_i)\rangle%
}{\langle Q(v)|Q(v)\rangle}  \notag \\
&=& \frac{\langle Q(v/v_i)|
M_i^{\dag-1}(\partial_{\theta_i}M_i^\dag)|v_i\rangle\langle
v_i|+|v_i\rangle\langle v_i|(\partial_{\theta_i}M_i)M_i^{-1} |Q(v/v_i)\rangle%
}{\langle Q(v/v_i)|Q(v/v_i)\rangle}/\frac{\langle Q(v)|Q(v)\rangle}{\langle
Q(v/v_i)|Q(v/v_i)\rangle}  \notag \\
&=& \frac{\langle Q(v/v_i)|
M_i^{\dag-1}(\partial_{\theta_i}M_i^\dag)|v_i\rangle\langle
v_i|+|v_i\rangle\langle v_i|(\partial_{\theta_i}M_i)M_i^{-1} |Q(v/v_i)\rangle%
}{\langle Q(v/v_i)|Q(v/v_i)\rangle}/\frac{\langle Q(v/v_i)|v_i\rangle\langle
v_i|Q(v/v_i)\rangle}{\langle Q(v/v_i)|Q(v/v_i)\rangle}.
\end{eqnarray}
So computing gradient of KL-divergence of QGM reduces to preparing tensor
network state $|Q(z)\rangle$ ($z$ being empty, $v$ or $v/v_i$
respectively) and measuring the expectation value of $O=|v\rangle\langle v|$%
, $O_1=(\partial_{\theta_i}M_i)M_i^{-1}+c.c. $ and $O_2=|v_i\rangle\langle
v_i| (\partial_{\theta_i}M_i)M_i^{-1}+c.c. $. If the denominator is small,
the error of the estimation through sampling could be large, in particular when the denominator is exponentially close to $0$. But we do not need to be pessimistic. Since this denominator represents the probability of getting $v_i$ for
measuring the $i$-th variable, conditioned on the remaining variables being $%
v/v_i$, this quantity should not be small if the model distribution is close to the real data
distribution. Otherwise, we are not able to
observe this data. If the model distribution is far from the real
data distribution, there is no need to estimate the gradient precisely. Moreover,
statistical fluctuation in sampling in this case could even help to jump out of
the local minimum, which is analogous to the stochastic gradient descent method
in traditional machine learning \cite%
{shalev2014understandings,goodfellow2016deeps}.

The efficiency of estimating the expectation value of local observable
through sampling depends on the maximum value of the observable. In the case of
calculation of gradient of the KL-divergence, the maximum value depends on $M_i$ whose minimum singular value is
set to $1$. Thus, the number of sampling is
bounded by $\kappa^2/\epsilon^2$, where $\kappa$ is the condition number
which is the maximum singular value of $M_i$, $\epsilon$ is the error of the
estimation. In the case when $\kappa$ is large, $M_i$ is
ill-conditioned, resulting in bad estimation. One strategy to avoid this
situation is to use regularization term \cite%
{shalev2014understandings,goodfellow2016deeps}. For example, we may use
\begin{equation}
\sum_i\mbox{tr} M_i^\dag M_i-\chi|\det M_i|^2,
\end{equation}
as a penalty where $\chi>0$ is a hyperparameter which adjust the importance
of the second term. The first term guarantees the singular value
of $M_i$ should not be large and the second term guarantees that $%
\lambda_1\lambda_2$ should not be too small. In the case that $%
\max(\lambda_1,\lambda_2)$ is not too large, it guarantees that $%
\min(\lambda_1,\lambda_2)$ is not too small, which means $M_i$ is not too
ill-conditioned.

\section{Proof of Theorem 3}

First, we prove that $|Q(z)\rangle$ could
represent any tensor network efficiently:
\begin{lemma}\label{lem:TN}
\emph{Choosing the conditioned variables $z$ and invertible matrices properly, $\ket{Q(z)}$ could efficiently represent any tensor network in which each tensor has constant degree and the virtual index range is bounded by a constant.}
\end{lemma}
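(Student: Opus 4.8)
\textbf{Proof strategy for Lemma \ref{lem:TN}.} The plan is to realize an arbitrary target tensor-network state $|\psi\rangle$ — obtained by contracting local tensors $\{A_v\}$ on a graph $\Gamma$ of constant degree, all virtual bonds having dimension at most a constant $D$ — as a state $|Q(z)\rangle$ on a larger, ``fattened'' graph $G$. The graph state $|G\rangle$ will play the role of a measurement-based quantum computing (MBQC) resource, the single-qubit invertible matrices $M_i$ will fix the measurement bases, and the conditioned variables $z$ will post-select the particular measurement outcomes that deterministically drive the computation; the unconditioned qubits will carry the physical indices of $|\psi\rangle$.

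First I would put the target into a standard form. Pad each virtual index so that $D$ becomes the next power of two $2^{\ell}$ with $\ell=O(1)$, and if a physical index has dimension larger than $2$ split it into $O(1)$ physical qubits. Writing each virtual bond $\langle u,v\rangle$ as a maximally entangled pair $|\omega\rangle_{uv}$ of $\ell$ qubits on each side, each tensor becomes a linear map $A_v$ from its $\ell d_v$ virtual ``half-bond'' qubits to its $p_v$ physical qubits, with $d_v=O(1)$ the degree of $v$. Rescaling so $\|A_v\|\le1$, a unitary (Stinespring) dilation yields a unitary $U_v$ on $\ell d_v+p_v+a_v$ qubits with $a_v=O(1)$, such that $A_v$ is recovered from $U_v$ by feeding in the half-bond qubits together with $|0\rangle^{\otimes a_v}$ and projecting $\langle 0|^{\otimes a_v}$ on the complementary ``ancilla-out'' register. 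Since $U_v$ acts on $O(1)$ qubits it is realized by a constant-size circuit over a fixed universal gate set.

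Next I would compile each $U_v$ into a brickwork (cluster-state) MBQC patch \cite{raussendorf2001ones,broadbent2009universals} — the very construction of Fig.~\ref{fig:qs}. A brickwork state is a graph state, and $U_v$ is driven by measuring its qubits in single-qubit bases $\{|{+}_{\theta}\rangle,|{-}_{\theta}\rangle\}$; selecting all outcomes to be the ``$+$'' branch reproduces $U_v$ up to a determined Pauli byproduct. In the QGM language, ``measure qubit $i$ in the basis $\{|{+}_{\theta_i}\rangle,|{-}_{\theta_i}\rangle\}$ and keep the outcome $|{+}_{\theta_i}\rangle$'' is exactly ``apply the invertible matrix $M_i$ with $M_i^{\dagger}|0\rangle\propto|{+}_{\theta_i}\rangle$ and then condition on $z_i=0$'', so every MBQC measurement becomes a conditioned variable. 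Two patches whose vertices share a bond of $\Gamma$ are linked by $\ell$ graph-state edges; since a graph-state edge is a maximally entangled pair up to a fixed local Clifford, this reproduces $|\omega\rangle_{uv}$ with the residual Clifford absorbed into an adjacent $M_i$. The union of all patches and link-edges is a single graph $G$ of constant degree and $\poly(|\Gamma|)$ size; taking $z$ to be all measured and ancilla qubits (with their fixed values) and the visible set to be the $p_v$ physical qubits makes $|Q(z)\rangle$ proportional to $|\psi\rangle$.

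The main obstacle will be the bookkeeping of non-unitarity together with the MBQC byproduct operators under post-selection. Non-unitarity is handled cleanly by the dilation, which keeps every gadget of constant size. For the byproducts, the point to nail down is that the chosen (all-``$+$'') branch reproduces each intended $U_v$ up to a Pauli that is efficiently computable and that propagates through the Clifford layers of neighbouring patches as a Pauli, so the correction accumulated on the visible qubits is a product of local Paulis; being invertible, it is undone by folding it into the final $M_i$ on each visible qubit. The (nonzero) scalar picked up from all the projections is likewise efficiently computable and irrelevant to ``representing'' $|\psi\rangle$. One finally checks that $G$ is genuinely constant-degree — immediate because $\ell$, $d_v$, $a_v$ and each patch's size are $O(1)$ — so that the parent-Hamiltonian construction of the earlier sections applies; this is exactly the constant-degree tensor network that is fed, in the proof of Theorem 3, by the two-dimensional history state for a universal quantum circuit.
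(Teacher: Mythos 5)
Your proposal is correct and follows essentially the same route as the paper: each constant-size tensor is realized by a constant-size post-selected measurement-based-quantum-computing gadget (the conditioned variables $z$ fixing the measurement outcomes and the $M_i$ fixing the bases), and the gadgets are glued along graph-state edges. The only cosmetic differences are that the paper prepares the tensor directly as the state $\sum_{i\cdots j}A_{i\cdots j}\ket{i\cdots j}$ rather than via a Stinespring dilation, and it corrects the Hadamard-versus-identity edge-tensor mismatch by inserting an extra conditioned vertex carrying a Hadamard, whereas you absorb the same residual Clifford into an adjacent $M_i$ — both fixes are valid.
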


\begin{proof}
For a tensor $A_{i\cdots j}$, consider a quantum circuit preparing the following state:
\begin{equation}
\sum_{i\cdots j}A_{i\cdots j}\ket{i\cdots j}
\end{equation}
where the dimension of the Hilbert space is bounded by a constant if the number of indices and the range of each index of this tensor are bounded. The quantum circuit could be represented by a state $\ket{Q(z)}$ like the one in Fig. \ref{fig:qs} with constant size, conditioned on specific values of some variables. This is just a post-selection of measurement result in measurement-based quantum computing\cite{raussendorf2001ones}. Then we consider contracting virtual indices between different tensors. In this case, direct contraction will lead to a problem: the edge connecting two different tensors may be an identity tensor instead of a Hadamard tensor as in state $\ket Q$. We can solve this problem by introducing an extra variable in the middle of these two indices and connecting the two indices by Hadamard tensor. Further applying a Hadamard gate on it and conditioning on this extra variable being $0$, the net effect is equivalent to connecting the two indices by an identity tensor.\end{proof}



Since computation on the QGM is
reduced to preparing $|Q(z)\rangle$, we will focus on tensor network
construction for the state $|Q(z)\rangle$ in the following discussion. For an instance that
our algorithm could run in polynomial time, $\min_t\Delta_t$ and $%
\min_t\eta_t$ should be both at least $1/\mbox{poly}(n)$. We construct a tensor network satisfying this
requirement which at the same time encodes universal quantum computing. Therefore, a classical model is not able to 
to produce this result if quantum computation can not be efficiently simulated classically.

\begin{figure}[tbp]
\centering
\includegraphics[width=1\linewidth]{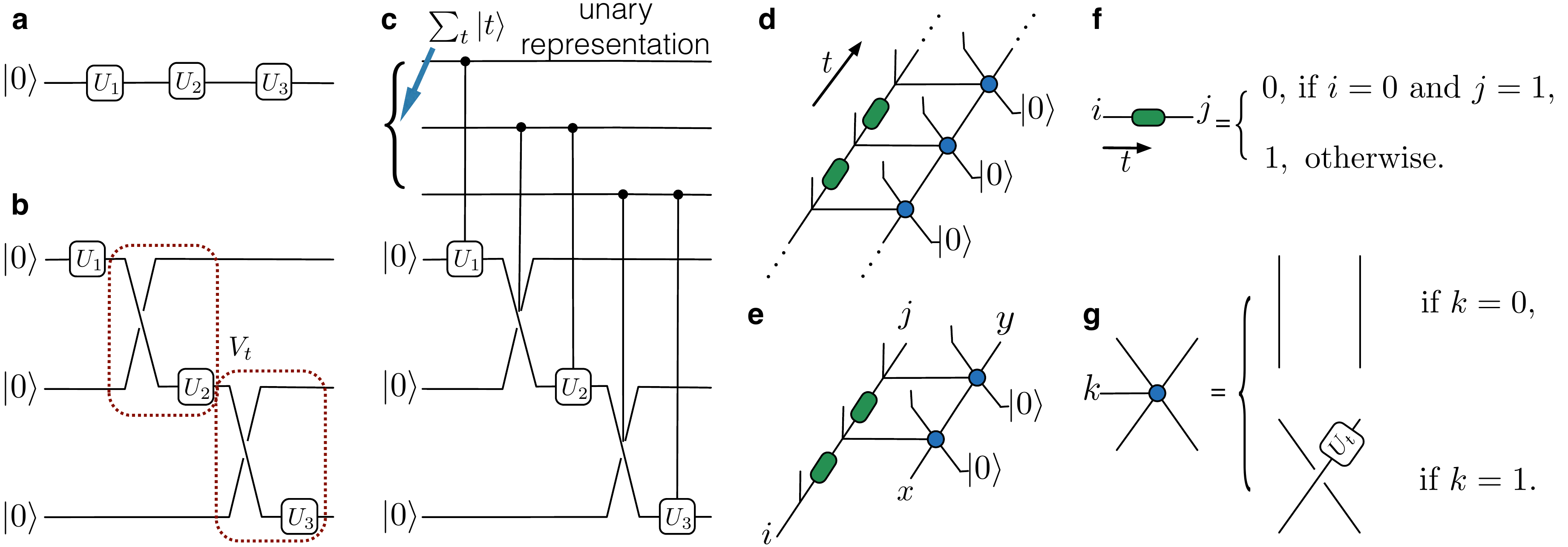}
\caption{ \textbf{Construction of the History State.} Illustration features one qubit only for convenience of drawing. It is straightforward to generalize it to the case of $n$ qubits by considering the layout of the circuits in Fig. 1 of Ref.\cite{oliveira2008complexitys}. \textbf{a,} The universal
quantum circuit. \textbf{b,} By adding swap gates and ancilla qubits, we can
simulate the circuit in \textbf{a} by a 1D (2D for $n$ qubits case) tensor
network state since there is only two gates applying on each qubit. The
circuit in red dashed box is $V_t$. \textbf{c,} The history state of circuit
\textbf{b} (here we omit the starting $|1\rangle$ and the end $|0\rangle$).
\textbf{d, } Tensor network representation of the state in \textbf{c}.
\textbf{e, }Group of tensors which is used to construct parent Hamiltonian, $%
x,y,i,j$ being virtual indices. \textbf{f,g, }Definition of tensors in \textbf{%
d,e}. \textbf{f, }Tensors representing $\sum_t|t\rangle$ (unary representation
of $t$). \textbf{g, }Tensors representing controlled-swap and controlled-$U_t$ or
simply controlled-$V_t$. }
\label{fig:clock}
\end{figure}

First, we construct the tensor network state encoding universal quantum
computation. Consider the following history state encoding history of quantum
computation:
\begin{equation}
|\psi_\text{clock}\rangle=\frac{1}{\sqrt{T+1}}\sum_{t=0}^{T}|t\rangle\otimes
V_t\cdots V_1|0\rangle^{\otimes m}
\end{equation}
where $T$ is the number of gates in the quantum circuit, $
|t\rangle=|1\cdots10\cdots0\rangle$ is the unary representation of the number $t$
(the first $t$ bits are set to $1$ and the last $T-t$ bits are set to $0$),
 $|0\rangle^{\otimes m}$ is the input state in the 2D layout of the circuits shown in Fig. 1 of Ref.\cite{oliveira2008complexitys} with $m$ (which is the number of qubits $n$ times the depth of the original circuit, i.e. roughly the same as $T$) qubits and $V_t$ is the $%
t$-th gate defined in Fig. \ref{fig:clock}b, i.e., $V_1=U_1, V_t=U_t\cdot\text{%
SWAP}$ for $t > 1$. This history state has been used to prove QMA-hardness for local Hamiltonian on a 2D
lattice \cite{oliveira2008complexitys}. This state can be represented by a ``2D" tensor network shown in Fig. 
\ref{fig:clock}d since there is only a constant number of gates applying on each
qubit. It can be verified directly that the tensor network in Fig. \ref%
{fig:clock}d represents $|\psi_\text{clock}\rangle$. For simplicity of
illustration, we consider computation on a single qubit where the ``2D" network
(actually 1D in this case) is shown in Fig. \ref{fig:clock}a. The tensor
in Fig. \ref{fig:clock}f must have the form $|1\cdots10\cdots0\rangle$
and all of them have the same weight; the left line in Fig. \ref{fig:clock}d
will be $|t\rangle$ entangled with state in right line; the tensor in
Fig. \ref{fig:clock}g is control-$V_t$ gate serving to make the right line in Fig. \ref%
{fig:clock}d  $V_t\cdots V_1|0\rangle^{\otimes m}$ if the
left line is $|t\rangle$. In this way, the state is exactly $%
|\psi_\text{clock}\rangle$. The general case ($n$-qubit case) for
universal quantum computing is constructed in the same way.

Second, we calculate the overlap between two successive tensor networks.
From Fig. \ref{fig:clock}d, direct calculation shows that the tensor network $%
|Q_t\rangle$ is
\begin{eqnarray}
\frac{1}{\sqrt{t+1}}\sum_{t_1=0}^{t}|t_1\rangle\otimes V_{t_1}\cdots
V_1|0\rangle^{\otimes m}.
\end{eqnarray}
Note that the number of bits in clock register $|t_1\rangle$ is the same for
any $t$. Then the overlap between $|Q_t\rangle$ and $|Q_{t-1}\rangle$ is
\begin{equation}
\langle Q_t|Q_{t-1}\rangle=\frac{1}{\sqrt{(t-1)t}}(t-1)=\sqrt{1-\frac{1}{t}},
\end{equation}
so
\begin{equation}
\eta_t=1-\frac{1}{t}\ge\frac{1}{2}\text{ for }t\ge2,
\end{equation}
which means
\begin{equation}
\frac{1}{\min_t\eta_t}=\mathcal{O}(1).
\end{equation}

Third, we calculate the parent Hamiltonian of tensor network $|Q_T\rangle$.
For simplicity, we consider quantum circuit on single qubit
(Fig. \ref{fig:clock}a) and the general case follows similarly. Each local term is constructed from five variables shown in Fig. \ref%
{fig:clock}e. With different virtual indices denoted by $i,j,x,y$, the
tensor corresponds to a five
dimensional space spanned by the following states:
\begin{eqnarray}
&& |000\rangle\otimes |x\rangle|0\rangle,  \notag \\
&& |111\rangle\otimes |0\rangle|0\rangle,  \notag \\
&&|100\rangle\otimes|x\rangle|0\rangle+|110\rangle\otimes|0 \rangle
U_t|x\rangle=|100\rangle\otimes|x\rangle|0\rangle+|110\rangle\otimes
V_t(|x\rangle|0\rangle),
\end{eqnarray}
where $x$ can be either $0$ or $1$. The corresponding local term in parent
Hamiltonian is projector $\Pi_t$ such that its null space is exactly the above
subspace so its rank is $32-5=27$:
\begin{eqnarray}
\Pi_t&=&\Pi_t^{(1)}+\Pi_t^{(2)}+\Pi_t^{(3)}+\Pi_t^{(4)},  \notag
\end{eqnarray}
where
\begin{eqnarray}
\Pi_t^{(1)}&=&|01\rangle\langle 01|\otimes I\otimes I\otimes
I+I\otimes|01\rangle\langle 01|\otimes I\otimes I,\quad\text{rank: }%
2^3+2^3=16,  \notag \\
\Pi_t^{(2)}&=&|000\rangle\langle 000|\otimes I \otimes |1\rangle\langle
1|+|111\rangle\langle 111|\otimes(I\otimes I-|00\rangle\langle 00|),\quad%
\text{ rank: }2+3=5,  \notag \\
\Pi_t^{(3)}&=&|100\rangle\langle 100|\otimes I\otimes |1\rangle\langle
1|+|110\rangle\langle 110|\otimes|1\rangle\langle 1|\otimes I,\quad\text{
rank: }2+2=4,  \notag \\
\Pi_t^{(4)}&=&\frac{|100\rangle\langle 100|\otimes I\otimes |0\rangle\langle
0|+|110\rangle\langle 110|\otimes |0\rangle\langle 0|\otimes
I-|100\rangle\langle 110|\otimes (|0\rangle\langle 0|\otimes
I)V_t^\dag-|110\rangle\langle 100|\otimes V_t(|0\rangle\langle 0|\otimes I)}{%
2},  \notag \\
&&\quad\text{ rank: }2,
\end{eqnarray}
where $\Pi^{(i)}_t\Pi^{(j)}_t=0$ for $i\ne j$ and the first three qubits are
in clock register. The rank of these projector could be calculated
from trace since all of them are projectors. The parent Hamiltonian is
\begin{equation}
H_p=\sum_{t=0}^T \Pi_t=H_p^{(1)}+H_p^{(2)}+H_p^{(3)}+H_p^{(4)}=\sum_{t=0}^T
\Pi_t^{(1)}+\sum_{t=0}^T \Pi_t^{(2)}+\sum_{t=0}^T \Pi_t^{(3)}+\sum_{t=0}^T
\Pi_t^{(4)},
\end{equation}
The terms at the boundary are a little bit different since $x=0,i=1$ at the
start and $j=0$ at the end.

Finally, we analyze the energy gap of the parent Hamiltonian of $|Q_t\rangle$%
. In order to analyze spectrum of $H_p$, it is convenient to introduce
the perturbation theory used in \cite{kempe2006complexitys}. We will
use first order perturbation. Consider $\widetilde H=H+V$, let $\Pi_-$ be
the projector to subspace of $H$ with $0$ eigenvalue, and the eigenvalues in $%
\Pi_+$ are greater than $J$. If $J\gg\parallel V\parallel$ where $%
\parallel\cdot\parallel$ denotes the spectral norm, then the low energy
spectrum (those much smaller than $J$) of $\widetilde H$ will be
approximately the same as the spectrum of $V_{--}\equiv \Pi_{-}V\Pi_{-}$.
For convenience, given an operator $X$, we denote $X_{-+}=\Pi_-X\Pi_+$.
Similarly for $X_{--}$ and $X_{++}$. If $X$ is block diagonal in the
subspace $\Pi_-$ and $\Pi_+$, denote $X_-=X_{--}$.

\begin{lemma}[Ref.\protect\cite{kempe2006complexitys}]
\emph{\ Consider the resolvent of $\widetilde H$
\begin{equation*}
\widetilde G(z)=(zI-\widetilde H)^{-1}
\end{equation*}
and define self-energy as
\begin{equation*}
\Sigma_-(z)=zI_--\widetilde G^{-1}_{--}(z).
\end{equation*}
Let $\widetilde \lambda_j$ be the $j$-th eigenvalue of $\widetilde H$ below $%
J/4$. The $j$-th eigenvalue of $\Sigma_-(\widetilde \lambda_j)$ is also $%
\widetilde \lambda_j$.}
\end{lemma}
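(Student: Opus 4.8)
The plan is to reduce everything to a Schur complement in the $\Pi_\pm$ block decomposition. Write $\widetilde H = H+V$ in $2\times 2$ block form with respect to $I = \Pi_- + \Pi_+$, so that $H$ restricts to $0$ on the range of $\Pi_-$ and to $H_{++}$ (all eigenvalues $>J$) on the range of $\Pi_+$. Writing $zI-\widetilde H$ in this block form and eliminating the $(+,+)$ block, I expect to obtain, for every $z<J/4$ — where, since $\|V\|\le J/2$, the operator $zI_+-H_{++}-V_{++}$ is negative definite and hence invertible — the closed form
\begin{equation*}
\Sigma_-(z) = zI_- - \widetilde G_{--}^{-1}(z) = V_{--} + V_{-+}\bigl(zI_+ - H_{++} - V_{++}\bigr)^{-1} V_{+-}.
\end{equation*}
This $\Sigma_-(z)$ is Hermitian and real-analytic in $z$ on $(-\infty,J/4)$, so it is well defined at each $\widetilde\lambda_j$ even though $\widetilde G$ has a pole there; at such a point this analytic continuation is what is meant by $\Sigma_-$. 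I would phrase all remaining steps through this explicit formula.

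First I would check that every $\widetilde\lambda_j$ is a fixed point of $\Sigma_-$. Let $|\psi\rangle = |\psi_-\rangle + |\psi_+\rangle$ be an eigenvector of $\widetilde H$ with eigenvalue $\widetilde\lambda < J/4$. Projecting $\widetilde H|\psi\rangle = \widetilde\lambda|\psi\rangle$ onto the range of $\Pi_+$ gives $|\psi_+\rangle = (\widetilde\lambda I_+ - H_{++} - V_{++})^{-1}V_{+-}|\psi_-\rangle$, and substituting this into the $\Pi_-$-projection gives $\Sigma_-(\widetilde\lambda)|\psi_-\rangle = \widetilde\lambda|\psi_-\rangle$; moreover $|\psi_-\rangle\neq 0$, since $|\psi_-\rangle = 0$ would force $|\psi_+\rangle=0$. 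Running this over a basis of the $\widetilde\lambda$-eigenspace of $\widetilde H$ — on which $|\psi\rangle\mapsto|\psi_-\rangle$ is injective by the same observation — shows that $\widetilde\lambda$ is an eigenvalue of $\Sigma_-(\widetilde\lambda)$ with multiplicity at least its $\widetilde H$-multiplicity.

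The hard part is to promote ``an eigenvalue'' to ``the $j$-th eigenvalue'', which I would do by comparing counting functions. For $\lambda<J/4$ I would apply Sylvester's law of inertia (Haynsworth additivity) to $\lambda I-\widetilde H$ in block form: its inertia is that of $\lambda I_+ - H_{++} - V_{++}$ plus that of the Schur complement $\lambda I_- - \Sigma_-(\lambda)$. The first term is negative definite, so it contributes no positive eigenvalues, giving
\begin{equation*}
\#\{\text{eigenvalues of }\widetilde H\text{ below }\lambda\} = \#\{\text{eigenvalues of }\Sigma_-(\lambda)\text{ below }\lambda\}.
\end{equation*}
Next, differentiating the closed form gives $\frac{d}{dz}\Sigma_-(z) = -V_{-+}(zI_+ - H_{++} - V_{++})^{-2}V_{+-}$, which is negative semidefinite, so $\Sigma_-(z_2)-\Sigma_-(z_1)$ is negative semidefinite for $z_1<z_2$; by Weyl's inequality each increasingly-ordered eigenvalue $\sigma_j(z)$ of $\Sigma_-(z)$ is non-increasing, hence $\sigma_j(z)-z$ is strictly decreasing with a single zero $z_j$, and $z_1\le z_2\le\cdots$. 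Then $\sigma_j(\lambda)<\lambda \iff \lambda>z_j$, so the right-hand count above is $\#\{j:z_j<\lambda\}$ while the left-hand count is $\#\{j:\widetilde\lambda_j<\lambda\}$; two non-decreasing sequences with the same counting function coincide, giving $z_j=\widetilde\lambda_j$ — which, unwinding the definition of $z_j$, says precisely that the $j$-th eigenvalue of $\Sigma_-(\widetilde\lambda_j)$ equals $\widetilde\lambda_j$.

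I expect the genuine difficulties to be bookkeeping rather than conceptual: handling the values $\lambda$ that are themselves eigenvalues of $\widetilde H$ (by left-continuity of the two counts), tracking multiplicities consistently through the inertia identity, and ensuring existence — not just uniqueness — of each crossing $z_j$ inside the window $(-\infty,J/4)$ on which $\Sigma_-$ is defined, which is exactly where the hypotheses $\widetilde\lambda_j<J/4$ and $\|V\|\le J/2$ enter.
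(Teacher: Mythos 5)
The paper does not actually prove this lemma --- it is imported verbatim from Kempe--Kitaev--Regev \cite{kempe2006complexitys} and used as a black box to establish the Generalized Projection Lemma that follows. Your argument is a correct, self-contained reconstruction, and its two main ingredients --- the Schur-complement closed form $\Sigma_-(z)=H_{--}+V_{--}+V_{-+}(zI_+-H_{++}-V_{++})^{-1}V_{+-}$ (with $H_{--}=0$ here) and the negative-semidefiniteness of $d\Sigma_-/dz$, which makes each $\sigma_j(z)-z$ strictly decreasing --- are the same ones underlying the original reference. What you do differently is the passage from ``$\widetilde\lambda_j$ is \emph{an} eigenvalue of $\Sigma_-(\widetilde\lambda_j)$'' to ``it is the \emph{$j$-th}'': your use of Haynsworth inertia additivity on $\lambda I-\widetilde H$, giving $\#\{\text{eigenvalues of }\widetilde H\text{ below }\lambda\}=\#\{\text{eigenvalues of }\Sigma_-(\lambda)\text{ below }\lambda\}$ for every $\lambda<J/4$ (the $(+,+)$ block being negative definite there since $\|V\|<J/2$ follows from $J>\|V\|^2+2\|V\|$), is a clean and arguably more transparent route to the ordering-with-multiplicity statement than tracking poles of $\widetilde G_{--}$. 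The residual bookkeeping you flag (left-continuity of the two counting functions at eigenvalues, existence versus mere uniqueness of each crossing $z_j$ inside the window) is genuinely minor and resolves exactly as you indicate, since the inertia identity holds at every $\lambda<J/4$ including eigenvalues of $\widetilde H$, and a $j$ whose crossing does not occur in the window contributes to neither count.
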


\noindent Then we can prove the low energy spectrum of $\widetilde H$ and $%
V_{--}$ are approximately the same using the series expansion of $\Sigma_-(z)
$, which is the generalization of Projection Lemma in \cite%
{kempe2006complexitys}.

\begin{lemma}[Generalized Projection Lemma]
\emph{\ If the energy gap of $H$ is $J>\parallel V\parallel^2+2\parallel
V\parallel$, the spectrum of $\widetilde H$ below $J/4$ is $\parallel
V\parallel^2/J$-close to the spectrum of $V_{--}$.}
\begin{proof}
Consider the series expansion of $\Sigma_-(z)$ as shown in Ref.\cite{kempe2006complexitys}:
\begin{equation}
\Sigma_-(z)=H_- + V_{--} + V_{-+}G_+V_{+-} + V_{-+}G_+V_{++}G_+V_{+-} + V_{-+}G_+V_{++}G_+V_{++}G_+V_{+-} + \cdots
\end{equation}
where $G$ is the resolvent of $H$:
\begin{equation*}
G(z)=(zI-H)^{-1},
\end{equation*}
and $H_-=0$ in our problem. For $z\le J/4$, we have
\begin{equation}
\parallel G_+\parallel \le  \frac{4}{3}\cdot\frac{1}{J}=\mathcal O\left(\frac{1}{J}\right).
\end{equation}
Thus, for $z<J/4$ and $J> \parallel V\parallel^2+2\parallel V\parallel$, we have
\begin{equation}
\parallel \Sigma_-(z)-V_{--}\parallel=\mathcal O\left(\sum_{k=1}^\infty \frac{\parallel V\parallel ^{k+1}}{J^k} \right)=\mathcal O\left(\frac{\parallel V \parallel^2}{J}\right).
\end{equation}
According to a special case of Weyl's inequality or the one in \cite{kempe2006complexitys}, the difference between $j$-th eigenvalues of $\Sigma_-(z)$ and $V_{--}$ for any $z<J/4$ is bounded by the spectral norm of their difference. Therefore the difference between the $j$-th eigenvalue below $J/4$ of $\widetilde H$ and $V_{--}$ is $\mathcal O(\parallel V\parallel^2/J)$.
\end{proof}
\end{lemma}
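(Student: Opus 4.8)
The plan is to reduce the statement to the self-energy (effective-Hamiltonian) formalism recalled just above, rather than argue variationally. Write $\Pi_-$ for the projector onto $\ker H$ and $\Pi_+=I-\Pi_-$, so that in the block decomposition $\mathcal H_-\oplus\mathcal H_+$ one has $H_{--}=0$ and $H_{++}\succeq J\Pi_+$. By the self-energy lemma, the $j$-th eigenvalue $\widetilde\lambda_j$ of $\widetilde H$ below $J/4$ is a fixed point of the self-energy, in the sense that it equals the $j$-th eigenvalue of $\Sigma_-(\widetilde\lambda_j)$, where $\Sigma_-(z)=zI_- - \widetilde G_{--}^{-1}(z)$. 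A short block-matrix computation (the Schur complement of the $++$ block) gives the closed form $\Sigma_-(z)=V_{--}-V_{-+}(H_{++}+V_{++}-zI_+)^{-1}V_{+-}$, which is precisely the Neumann series written above, resummed. So everything reduces to showing that $\Sigma_-(z)$ stays operator-norm close to $V_{--}$ uniformly for $z\le J/4$, and then transferring this to eigenvalues via Weyl's inequality.

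For the norm estimate the only analytic input is a resolvent bound. For $z\le J/4$ every nonzero eigenvalue of $H$ is at least $J$, hence $H_{++}+V_{++}-zI_+\succeq(3J/4-\|V\|)\Pi_+$, which is invertible once $J>2\|V\|$, with $\|(H_{++}+V_{++}-zI_+)^{-1}\|\le 1/(3J/4-\|V\|)$. Since $\|V_{-+}\|,\|V_{+-}\|\le\|V\|$, this gives
\[
\|\Sigma_-(z)-V_{--}\|\;\le\;\frac{\|V\|^2}{3J/4-\|V\|}\;=\;\mathcal O\!\left(\frac{\|V\|^2}{J}\right) .
\]
The full hypothesis $J>\|V\|^2+2\|V\|$ is exactly what is needed here: the $2\|V\|$ summand guarantees invertibility of the $++$ block (equivalently, convergence of the Neumann series), while the $\|V\|^2$ summand is what turns the estimate into the clean $\mathcal O(\|V\|^2/J)$ form stated.

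To conclude, apply Weyl's inequality at $z=\widetilde\lambda_j$: the $j$-th eigenvalue of $\Sigma_-(\widetilde\lambda_j)$, which is $\widetilde\lambda_j$ by the self-energy lemma, differs from the $j$-th eigenvalue $\lambda_j(V_{--})$ by at most $\|\Sigma_-(\widetilde\lambda_j)-V_{--}\|=\mathcal O(\|V\|^2/J)$, so $|\widetilde\lambda_j-\lambda_j(V_{--})|=\mathcal O(\|V\|^2/J)$ for every $j$. The step I expect to need the most care is not this estimate but making the pairing honest, i.e. verifying that the index $j$ genuinely runs over exactly the part of $\operatorname{spec}(\widetilde H)$ below the cutoff $J/4$ and is in bijection with the corresponding part of $\operatorname{spec}(V_{--})$, with no eigenvalue lost just below $J/4$. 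This is handled by combining the continuity and monotonicity of $z\mapsto\Sigma_-(z)$ that underlie the self-energy lemma with a counting argument: in the regime of interest, where $J$ is polynomially large while $\|V\|=\mathcal O(1)$, all of $\operatorname{spec}(V_{--})\subseteq[-\|V\|,\|V\|]$ lies well below $J/4$ and $\widetilde H$ has no eigenvalue parked near the cutoff, so the $j$-indexing is consistent on both sides. The remaining work — the Schur-complement identity, the geometric-series bound, and the Weyl bookkeeping — is routine.
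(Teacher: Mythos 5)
Your proof is correct and follows essentially the same route as the paper: both invoke the self-energy lemma of Kempe--Kitaev--Regev, bound $\|\Sigma_-(z)-V_{--}\|=\mathcal O(\|V\|^2/J)$ for $z\le J/4$ (you via the resummed Schur-complement form of $\Sigma_-(z)$ with a direct resolvent bound, the paper via the term-by-term geometric series with $\|G_+\|\le 4/(3J)$ --- the same estimate), and conclude with Weyl's inequality. Your closing remark on the consistency of the eigenvalue indexing below the cutoff is a point the paper leaves implicit, but it does not change the argument.
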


\noindent Using this lemma, we prove the following lemma regarding gap of $H_p$.

\begin{lemma}
\emph{\ The ground state of $H_p$ is unique and has eigenvalue zero. If $T=%
\mbox{poly}(n)$, the energy gap of $H_p$ is at least $1/\mbox{poly}(n)$.}
\begin{proof}
Consider the spectrum of $JH_p$.
\begin{equation}
JH_p\ge J_1H_p^{(1)}+J_3H_p^{(3)}+H_p^{(4)}
\end{equation}
where we choose $J_1$ and $J_3$ later satisfying $J>J_1>J_3$. Notice that all the Hamiltonians is positive semidefinite. We regard $H=J_1H_p^{(1)}$ and $V=J_3H_p^{(3)}+H_p^{(4)}$. In this case, $\parallel V\parallel=\mathcal O(J_3T)$. Because the ground subspace of $H_p^{(1)}$ is restricted to history state,
\begin{eqnarray}
\Pi_{t--}^{(3)}&=&\ket{t-1}\bra{t-1}\otimes I\otimes \ket1\bra1+\ket{t}\bra{t}\otimes\ket1\bra1\otimes I,\\
\Pi_{t--}^{(4)}&=&\frac{\ket{t-1}\bra{t-1}\otimes I\otimes \ket{0}\bra{0}+\ket{t}\bra{t}\otimes \ket{0}\bra{0}\otimes I-\ket{t-1}\bra{t}\otimes (\ket 0\bra 0\otimes I)V_t^\dag-\ket{t}\bra{t-1}\otimes V_t(\ket0\bra0\otimes I)}{2},\\
V_{--}&=&\sum_{t}J_3\Pi_{t--}^{(3)}+\Pi_{t--}^{(4)}.
\end{eqnarray}
The difference between low energy spectrums of $\widetilde H=H+V$ and $V_{--}$ is $\mathcal O(J_3^2T^2/J_1)$. Then consider a new Hamiltonian $\widetilde H^\prime=H^\prime+V^\prime$ where
\begin{equation}
H^\prime=J_3\sum_t \Pi_{t--}^{(3)},
\end{equation}
\begin{equation}
V^\prime=\sum_t
\frac{\ket{t-1}\bra{t-1}\otimes (I\otimes I)+\ket{t}\bra{t}\otimes (I\otimes I)-\ket{t-1}\bra{t}\otimes V_t^\dag-\ket{t}\bra{t-1}\otimes V_t}{2}.
\end{equation}
The Hamiltonian $\widetilde H^\prime$ is similar to the one used to prove QMA-hardness of local Hamiltonian problems \cite{kitaev2002classicals,aharonov2002quantums} and energy gap in universality of adiabatic quantum computing\cite{aharonov2008adiabatics}. Let $\Pi^\prime_-$ be the projector to eigenspace of $H^\prime$ with zero eigenvalue. Direct calculation shows that
\begin{equation}
V^\prime_{--}\equiv \Pi^\prime_-V^\prime\Pi^\prime_-=\sum_t \Pi_{t--}^{(4)}
\end{equation}
The difference between low energy spectrum of $\widetilde H^\prime=H^\prime+V^\prime$ and $V_{--}^\prime$ is $\mathcal O(T^2/J_3)$. Now we have
\begin{equation}
JH_p\ge H+V\underset{\mathcal O(J_3^2T^2/J_1)}{\approx}V_{--}\ge V_{--}^\prime\underset{\mathcal O(T^2/J_3)}{\approx} H^\prime+V^\prime
\end{equation}
where $\underset{\delta}{\approx}$ means the difference of the low energy spectrum is bounded by $\delta$. All the derivation ignores the boundary terms that are not difficult but just tedious to take into account.
Thus we can bound the energy gap of $H_p$ to be
\begin{equation}
\Delta_p\ge \Omega\left(\frac{1}{JT^2}\right)-\mathcal O\left(\frac{J_3^2T^2}{JJ_1}\right)-\mathcal O\left(\frac{T^2}{JJ_3}\right)
\end{equation}
Choosing $J\sim J_1,J_1\sim J_3^3$ and $J_3>>T^4$ will give
\begin{equation}
\Delta_p\ge \Omega\left(\frac{1}{T^{14}}\right)=\frac{1}{\poly(n)}.
\end{equation}

\end{proof}
\end{lemma}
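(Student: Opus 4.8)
The plan is to recognize $H_{p}$ as a tensor-network parent Hamiltonian of Kitaev clock type and analyze it by the perturbation-theoretic scheme used for QMA-hardness of local Hamiltonians and for adiabatic universality \cite{kitaev2002classicals,oliveira2008complexitys,aharonov2008adiabatics}, organized around the decomposition $H_{p}=H_{p}^{(1)}+H_{p}^{(2)}+H_{p}^{(3)}+H_{p}^{(4)}$ fixed above: $H_{p}^{(1)},H_{p}^{(2)},H_{p}^{(3)}$ enforce the legal structure of the unary clock and of the qubit register, while $H_{p}^{(4)}$ is the propagation term coupling clock sector $t-1$ to sector $t$ through the gate $V_{t}$. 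For uniqueness of the zero-energy ground state I would use the standard parent-Hamiltonian argument: after the blocking that defines each local projector $\Pi_{t}$ the tensor network for $|Q_{T}\rangle$ becomes injective, so $|Q_{T}\rangle$ is the only state annihilated by all the $\Pi_{t}$. Equivalently, a direct induction on the clock value works: $\langle\psi|\Pi_{t}|\psi\rangle=0$ for every $t$ forces a legal unary clock register (through $H_{p}^{(1)},H_{p}^{(2)},H_{p}^{(3)}$), after which $H_{p}^{(4)}$ pins the amplitude of clock sector $t$ to $V_{t}\cdots V_{1}|0\rangle^{\otimes m}$ times that of sector $t-1$, so $|\psi\rangle\propto|\psi_{\text{clock}}\rangle$.

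For the gap I would introduce an energy-scale hierarchy $J\gg J_{1}\gg J_{3}\gg T^{4}$ and apply the Generalized Projection Lemma twice. Since every $H_{p}^{(i)}$ is positive semidefinite, $JH_{p}\ge J_{1}H_{p}^{(1)}+J_{3}H_{p}^{(3)}+H_{p}^{(4)}$; I would take $H=J_{1}H_{p}^{(1)}$ as the unperturbed part and $V=J_{3}H_{p}^{(3)}+H_{p}^{(4)}$, with $\|V\|=\mathcal{O}(J_{3}T)$, as the perturbation. The zero-energy subspace $\Pi_{-}$ of $H$ is spanned by the legal-clock configurations, and the restrictions $\Pi_{t--}^{(3)},\Pi_{t--}^{(4)}$ there are the simple operators obtained by projecting out the illegal sectors; hence the effective Hamiltonian is $V_{--}=\sum_{t}\bigl(J_{3}\Pi_{t--}^{(3)}+\Pi_{t--}^{(4)}\bigr)$ up to spectral error $\mathcal{O}(J_{3}^{2}T^{2}/J_{1})$. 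A second application inside $V_{--}$, with $H^{\prime}=J_{3}\sum_{t}\Pi_{t--}^{(3)}$ as the new unperturbed part, projects onto the subspace $\Pi_{-}^{\prime}$ carrying the genuine qubit register and reduces the propagation term to $V_{--}^{\prime}=\sum_{t}\Pi_{t--}^{(4)}$ up to error $\mathcal{O}(T^{2}/J_{3})$.

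The residual operator $\sum_{t}\Pi_{t--}^{(4)}$ is exactly the standard propagation Hamiltonian, whose gap I would extract by conjugating with the block-diagonal unitary $W=\sum_{t}|t\rangle\langle t|\otimes V_{t}\cdots V_{1}$: this turns it into $\tfrac{1}{2}\sum_{t}(|t\rangle-|t-1\rangle)(\langle t|-\langle t-1|)\otimes I$, i.e.\ one half of the graph Laplacian of the path on $T+1$ vertices tensored with the identity on the qubit register, whose smallest nonzero eigenvalue is $1-\cos(\pi/(T+1))=\Theta(1/T^{2})$. Pushing the two perturbative errors back through the reductions and dividing by $J$ then gives $\Delta_{p}\ge\Omega\!\left(1/(JT^{2})\right)-\mathcal{O}\!\left(J_{3}^{2}T^{2}/(JJ_{1})\right)-\mathcal{O}\!\left(T^{2}/(JJ_{3})\right)$, and choosing $J\sim J_{1}$, $J_{1}\sim J_{3}^{3}$, $J_{3}\gg T^{4}$ makes both error terms subdominant, yielding $\Delta_{p}\ge\Omega(1/T^{14})=1/\poly(n)$ whenever $T=\poly(n)$.

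I expect the bulk of the work to be bookkeeping rather than a single hard step. Three points need care: (i) checking that the effective Hamiltonians $V_{--}$ and $V_{--}^{\prime}$ are precisely as claimed, which means computing $\Pi_{-}(\cdot)\Pi_{-}$ and $\Pi_{-}^{\prime}(\cdot)\Pi_{-}^{\prime}$ on the legal-clock and legal-qubit subspaces and verifying the cross terms vanish; (ii) the boundary sectors $t=0$ (with $x=0,i=1$) and $t=T$ (with $j=0$), cf.\ Fig.~\ref{fig:clock}, which modify a handful of projectors but leave the path-graph structure untouched; and (iii) fixing the scale hierarchy so that every perturbative error is strictly lower order than the $\Omega(1/T^{2})$ gap beside which it sits --- this is exactly what forces the large exponent in $T^{14}$.
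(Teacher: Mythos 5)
Your proposal is correct and follows essentially the same route as the paper: the operator lower bound $JH_p\ge J_1H_p^{(1)}+J_3H_p^{(3)}+H_p^{(4)}$, two successive applications of the Generalized Projection Lemma with the same unperturbed/perturbation splits, the same error terms $\mathcal O(J_3^2T^2/J_1)$ and $\mathcal O(T^2/J_3)$, and the same scale hierarchy yielding $\Omega(1/T^{14})$. You additionally spell out two points the paper leaves implicit — the uniqueness/frustration-freeness argument and the explicit $\Theta(1/T^2)$ gap of the residual propagation term via conjugation to the path-graph Laplacian — which are consistent with, and complete, the paper's argument.
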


\noindent The same analysis for spectral gap could be applied to any
intermediate parent Hamiltonian $H_t$. Thus
\begin{equation}
\frac{1}{\min_t\Delta_t}=\frac{1}{\mbox{poly}(n)}
\end{equation}%
.

In summary, we arrive at proof of theorem 3:
\begin{proof}
To prove the theorem for the problem of computing conditional probability is very straightforward since the computation could be reduced to measuring one qubit on $\ket{Q(z)}$ which encodes universal quantum computing in our construction.

For the problem of computing gradient of KL-divergence, we just choose $M_i$ in QGM to be parameterized by
\begin{equation}
M_i= \left(
  \begin{array}{ccc}
    \theta_i & \theta_i^\prime \\
    \theta_i^{\prime\prime} &  \theta_i^{\prime\prime\prime}\\
  \end{array}
\right)
\end{equation}
and when all the $M_i$ are unitary matrices we could have the state $\ket Q$ in Fig. \ref{fig:qs}. Then we consider the derivative of $\theta_{i^\prime}$ for a unconditioned variable and $M_{i^\prime}=I$. In this case,
\begin{equation}
O_1=(\partial_{\theta_{i^\prime}}M_{i^\prime})M_{i^\prime}^{-1}+  M_i^{\dag-1}(\partial_{\theta_{i^\prime}}M_{i^\prime}^\dag)=
 \ket0\bra0ã
\end{equation}
For the term $\log\innerp{Q}{Q}$,
\begin{equation}
\frac{\partial_{\theta_i}\innerp{Q}{Q}}{\innerp{Q}{Q}}=\bra{G}O_1\ket{G}=\frac{1}{2}.
\end{equation}
Then suppose we have only one training data $\ket{v}$ such that $\ket{Q(v)}$ is the tensor network shown in Fig. \ref{fig:clock} (where the state $\ket Q$ could be the one in Fig. \ref{fig:qs} since such a $\ket Q$ with postselection is universal and the proof of lemma \ref{lem:TN} requires only universility). The gradient of KL-divergence becomes
\begin{equation}
\frac{\partial_{\theta_i}\innerp{Q(v)}{Q(v)}}{\innerp{Q(v)}{Q(v)}}=\frac{\bra{Q(v)   }O_1 \ket{Q(v)}}{\innerp{Q(v)}{Q(v)}}.
\end{equation}
which is also measuring one qubit on $\ket{Q(z)}$ thus encoding universal quantum computing.
Suppose the acceptance probability of a BQP problem is $p_0$ (we define measuring 0 as accept),
the total gradient of KL-divergence for the parameter $\theta_{i^\prime}$ is
\begin{equation}
p_0-\frac{1}{2}.
\end{equation}
So we could estimating $p_0$ to $p_0\pm1/\poly(n)$ through estimating the gradient of KL-divergence.

Meanwhile, the algorithm for preparing $\ket{Q(z)}$ runs in polynomial time, thus we prove this theorem.
\end{proof}


\end{document}